\renewcommand{\qed}{\hfill\ensuremath{\dashv}}
\newcommand{\cmp}[1]{|#1|}
\newcommand{\ConL}{\mathsf{C}_L}
\newcommand{\ConR}{\mathsf{C}_R}
\newcommand{\crit}[1]{\mathsf{crit}(#1)}
\newcommand{\dpth}[1]{\mathsf{dp}(#1)}
\newcommand{\entails}{\vdash}
\newcommand{\EW}{\mathsf{EW}}
\newcommand{\K}{\mathsf{K}}
\newcommand{\KB}{\mathsf{KB}}
\newcommand{\Kt}{\mathsf{Kt}}
\newcommand{\HKt}{\mathsf{HKt}}
\newcommand{\LNS}{\mathsf{LNS}}
\newcommand{\mrg}{\oplus}
\newcommand{\nforces}{\not\forces}
\newcommand{\sa}{\Rightarrow}
\newcommand{\SL}{\mathsf{SL}}
\newcommand{\SR}{\mathsf{SR}}
\newcommand{\transl}[1][]{\ifstrempty{#1}{ \tau }{
    \tau \left( #1 \right)}} 
\newcommand{\W}{\mathsf{W}}
\newcommand{\wbx}{\Box}
\newcommand{\wdi}{\Diamond}
\newcommand{\bbx}{\blacksquare}
\newcommand{\bdi}{\blacklozenge}
\newcommand{\fwd}{\nearrow}
\newcommand\bwd{\swarrow}
\newcommand\upd{\updownarrow}
\newcommand{\stt}[1][]{\stackrel{#1}{\Rightarrow}}
\newcommand{\seq}[2]{#1 \stt #2}
\newcommand{\atm}{\texttt{Atm}}
\newcommand{\forces}{\Vdash}
\newcommand{\mcalG}{\mathcal{G}}
\newcommand{\mcalH}{\mathcal{H}}
\newcommand{\idrl}{(id)}
\title{Syntactic cut-elimination and backward proof-search for tense
  logic via linear nested sequents (Extended version)\thanks{Supported by WWTF project
    MA16-28}}
\author{Rajeev Gor{\'e}\inst{1} and Bj{\"o}rn Lellmann\inst{2}}
\institute{ Research School of Computer Science,
  Australian National University \and Faculty of Informatics, Technical
  University of Vienna }
\titlerunning{Cut elimination and proof-search for tense logic via
  linear nested sequents}
\begin{document}

\maketitle

\begin{abstract}
  We give a linear nested sequent calculus for the basic normal tense
  logic $\Kt$. We show that the calculus enables backwards
  proof-search, counter-model construction and syntactic
  cut-elimination. Linear nested sequents thus provide the minimal
  amount of nesting necessary to provide an adequate proof-theory for
  modal logics containing converse. As a bonus, this yields a cut-free
  calculus for symmetric modal logic $\KB$.
\end{abstract}

\section{Introduction}

The two main proof-calculi for normal modal logics are sequent calculi
and tableau calculi~\cite{gore-handbook}. Tableau calculi are
algorithmic, directly providing a decision procedure via cut-free
completeness. Sequent calculi are proof-theoretic, requiring us to
show completeness via cut-admissibility. Often, there is a direct
relationship between these two formalisms, where one can be seen as
the ``upside down'' variant of the other. However, this direct
relationship breaks down for modal logics where the modalities are
interpreted with respect to a Kripke reachability relation as well as
its converse relation, as in modal tense logic $\Kt$.

Modal sequent calculi go back to at least
1957~\cite{ohnishi-matsumoto-I}. Sequent calculi for normal modal
tense logics have proved more elusive, with some previous published
attempts failing
cut-elimination~\cite{trzesicki-gentzen-tense-logic-broken}: the
counter-example is $p \to \wbx \lnot \bbx \lnot p$. But we now have
several extended sequent frameworks for tense logics: for example,
display calculi~\cite{DBLP:journals/logcom/Wansing94}; nested
sequents~\cite{DBLP:journals/sLogica/Kashima94,DBLP:journals/corr/abs-1103-5286}
and labelled sequents~\cite{DBLP:conf/ausai/BonnetteG98}.  The main
disadvantage is the rather heavy machinery required to achieve
cut-elimination.  Tableau calculi for tense logics in contrast take a
global view of proof-search, permitting to expand any node in the
search space but requiring technical novelties such as dynamic
blocking~\cite{DBLP:conf/cade/HorrocksST00} and the use of a
``restart'' rule ~\cite{DBLP:conf/tableaux/GoreW09}.

But there is a glaring disparity between the simplicity of tableau
calculi for tense logics versus the mentioned extended sequent
frameworks, giving rise to the question: What is the minimum extension
over traditional sequents enabling a proof-theory for tense logics
amenable to (algorithmic) backward proof-search?

Here, we address this question by giving a sequent-style calculus for
tense logic $\Kt$ which includes two ``restart'' rules.  The calculus
is given in the \emph{linear nested sequent} framework. This
framework, essentially a reformulation of
\emph{2-sequents}~\cite{Masini:1992}, lies between the original
sequent framework and the nested sequent framework, in that it extends
the sequent structure to \emph{lists} of sequents. Apart from
\emph{op.cit.}, this framework yielded, e.g., cut-free calculi for a
number of standard normal and non-normal modal
logics~\cite{Lellmann:2015lns,Lellmann:2019TOCL,Parisi:2017PhD} as
well as temporal or intermediate logics of linear
frames~\cite{Indrzejczak:2016,Kuznets:2018AiML}. Yet, so far the only
examples were logics which either have a cut-free sequent formulation,
or where the underlying semantic structure exactly matches that of
linear nested sequents. The calculus presented here thus is
interesting for two reasons: First, it shows that not the full
complexity of nested sequents is necessary to capture tense logic
without cuts; second, it provides a non-trivial example showing that
the linear nested sequent framework can handle interesting logics
beyond the reach of standard sequents, with models not mirroring the
linear structure.

In the following, we present the calculus, then show how to use it for
backward proof-search and cut-free completeness.  We also show that it
is amenable to the usual proof-theoretic results such as the
admissibility of the structural rules and cut. As a bonus, this yields
a calculus for symmetric modal logic $\KB$, suggesting that the linear
nested sequent framework so far is the simplest purely syntactic
extension of the standard sequent framework capturing $\KB$ in a
cut-free way, since even hypersequent systems for $\KB$, such as that
of Lahav~\cite{DBLP:conf/lics/Lahav13}, seem to require an analytic
cut rule and hence are not completely cut-free.

\section{Preliminaries}

We assume that the reader is familiar with normal modal tense logics
and their associated Kripke semantics but give a very
terse introduction below.

\emph{Formulae} of normal modal tense logics are built from a given set
$\atm$ of atomic formula via the BNF grammar below where $p \in \atm$:
\[
   A := p \mid \bot \mid A \to A \mid \wbx A \mid
   \wdi A \mid \bbx A \mid \bdi A
\]
We assume conjunction, disjunction and negation are defined as usual.

The Kripke semantics for $\Kt$ is given by a non-empty set (of worlds)
$W$, a binary relation $R$ over $W$, and a valuation function $V$
mapping a world $w \in W$ and an atomic formula $p \in \atm$ to either
``true'' or ``false''. Given a Kripke model $\langle W, R, V\rangle$,
the forcing relation $w \forces A$ between a world $w \in W$ and a
formula $A$ is defined as follows (omitting clauses for the
propositional connectives):
\begin{center}
\begin{tabular}[c]{lll@{\hspace{0.5cm}}lll}
  \multicolumn{6}{c}{$w \forces p$ ~ if ~ $V(w, p) = true$}
\\
  $w \forces \wdi A$ & ~if~ & $\exists v \in W.\ w R v ~\&~ v \forces A$
  &   $w \forces \bdi A$ & ~if~ & $\exists v \in W.\ v R w ~\&~ v \forces A$
\\
  $w \forces \wbx A$ & ~if~ & $\forall v \in W.\ w R v ~\Rightarrow~ v \forces A$
  &   $w \forces \bbx A$ & ~if~ & $\forall v \in W.\ v R w ~\Rightarrow~ v \forces A$
\end{tabular}
\end{center}
As usual, a formula $A$ is \emph{satisfiable} if there is some Kripke
model $\langle W, R, V\rangle$, and some world $w \in W$ such that
$w \forces A$. A formula $A$ is \emph{valid} if $\lnot A$ is
unsatisfiable. Formally, the logic $\Kt$ is the set of all valid
formulae.

The traditional Hilbert system $\HKt$ for tense logic $\Kt$ takes all
classical propositional tautologies as axioms, adds the axioms
$\wbx (A \to B) \to (\wbx A \to \wbx B)$ and
$\bbx (A \to B) \to (\bbx A \to \bbx B)$, the necessitation rules
$\mathrm{Nec}_{\wbx}: A / \wbx A$ and
$\mathrm{Nec}_{\bbx}: A / \bbx A$, and the two interaction axioms
$\wdi\bbx p \to p$ and $\bdi\wbx p \to p$. The system $\HKt$ is sound
and complete w.r.t.\ the Kripke semantics.

\section{A Linear Nested Sequent Calculus for $\Kt$}

\begin{figure}[t]
  \hrule\smallskip
  \[
    \infer[\wbx_R^1]{\mathcal{G} \upd \Gamma \sa \Delta \bwd \Sigma
      \sa \Pi, \wbx A
    }
    {\mathcal{G} \upd \Gamma \sa \Delta, A \bwd \Sigma \sa \Pi, \wbx A
      &
      \mathcal{G} \upd \Gamma \sa \Delta \bwd \Sigma\sa \Pi, \wbx A \fwd
      \epsilon\sa A
    }
  \]
  \[
    \infer[\bbx_R^1]{\mathcal{G} \upd \Gamma \sa \Delta \fwd \Sigma
      \sa \Pi, \bbx A
    }
    {\mathcal{G} \upd \Gamma \sa \Delta, A \fwd \Sigma \sa \Pi, \bbx A
      &
      \mathcal{G} \upd \Gamma \sa \Delta \fwd \Sigma\sa \Pi, \bbx A \bwd
      \epsilon\sa A
    }
  \]
  \[
    \infer[\wbx_R^2]{\mathcal{G} \fwd \Gamma \sa \Delta, \wbx A
    }
    {\mathcal{G} \fwd \Gamma \sa \Delta, \wbx A \fwd \epsilon\sa A
    }
    \qquad
    \infer[\bbx_R^2]{\mathcal{G} \bwd \Gamma \sa \Delta, \bbx A
    }
    {\mathcal{G} \bwd \Gamma \sa \Delta, \bbx A \bwd \epsilon\sa A
    }
  \]
  \[
    \infer[\wbx_L^1]{\mathcal{G} \upd \Gamma, \wbx A \sa \Delta \fwd
      \Sigma \sa \Pi
    }
    {\mathcal{G} \upd \Gamma, \wbx A \sa \Delta \fwd
      \Sigma, A \sa \Pi
    }
    \quad
    \infer[\bbx_L^1]{\mathcal{G} \upd \Gamma, \bbx A \sa \Delta \bwd
      \Sigma \sa \Pi
    }
    {\mathcal{G} \upd \Gamma, \bbx A \sa \Delta \bwd
      \Sigma, A \sa \Pi
    }
  \]
  \[
    \infer[\wbx_L^2]{\mathcal{G} \upd \Gamma \sa \Delta \bwd
      \Sigma, \wbx A \sa \Pi
    }
    {\mathcal{G} \upd \Gamma, A \sa \Delta 
    }
    \quad
    \infer[\bbx_L^2]{\mathcal{G} \upd \Gamma \sa \Delta \fwd
      \Sigma, \bbx A \sa \Pi
    }
    {\mathcal{G} \upd \Gamma, A \sa \Delta 
    }
  \]
  \[
    \infer[\idrl]{\mathcal{G} \upd \Gamma, p \sa p, \Delta
    }
    {
    }
    \quad
    \infer[\bot_L]{\mathcal{G} \upd \Gamma, \bot \sa \Delta
    }
    {
    }
    \quad
    \infer[\EW]{\mathcal{G} \upd \Gamma \sa \Delta
    }
    {\mathcal{G}
    }
  \]
  \[
    \infer[\to_R]{\mathcal{G} \upd \Gamma \sa \Delta, A \to B
    }
    {\mathcal{G} \upd \Gamma, A \sa \Delta, A \to B, B
    }
    \qquad
    \infer[\to_L]{\mathcal{G} \upd \Gamma, A \to B \sa \Delta
    }
    {\mathcal{G} \upd \Gamma, A \to B, B \sa \Delta
      &
      \mathcal{G} \upd \Gamma, A \to B \sa \Delta, A
    }
  \]
  \hrule
  \caption{The system $\LNS_\Kt$ where $\upd$ stands for either $\fwd$ or $\bwd$}
  \label{fig:rules-kb}
\end{figure}

Unlike standard Hilbert-calculi, our calculus operates on linear
nested sequents instead of formulae, defined and adapted from
Lellmann~\cite{Lellmann:2015lns} as follows.

\begin{definition}
  A \emph{component} is an expression $\Gamma \stt \Delta$ where the
  \emph{antecedent} $\Gamma$ and the \emph{succedent} $\Delta$ are
  finite, possibly empty, multisets of formulae.  We write $\epsilon$
  to stand for an empty antecedent or succedent to avoid confusion. A
  \emph{linear nested sequent} is an expression obtained via the
  following BNF grammar:
  \[
    S
    := \seq{\Gamma}{\Delta} \mid  \seq{\Gamma}{\Delta} \fwd  S \mid
    \seq{\Gamma}{\Delta} \bwd S \;.
  \]
\end{definition}

We often write $\mathcal{G}$ for a possibly empty \emph{context}:
e.g., $\mathcal{G} \fwd \Gamma \stt \Delta$ stands for
$\Gamma \stt \Delta$ if $\mathcal{G}$ is empty, and for
$\Sigma \stt \Pi \bwd \Omega \stt \Theta\fwd \Gamma \stt \Delta$ if
$\mathcal{G}$ is the linear nested sequent
$\Sigma \stt \Pi \bwd \Omega \stt \Theta$.  Fig.~\ref{fig:rules-kb}
shows the rules of our calculus $\LNS_\Kt$. As usual, each rule has a
number of \emph{premisses} above the horizontal line and a single
\emph{conclusion} below it.  The single formula in the conclusion is
the \emph{principal} formula and the formulae in the premisses are the
\emph{side-formulae}.

Every instance of the rule (id) is a \emph{derivation} of height 0,
and if $(\rho)$ is an $n$-ary rule and we are given $n$ premiss
derivations $d_1, \cdots , d_n$, each of height $h_1, \cdots, h_n$,
with respective conclusions $c_1, \cdots, c_n$, and
$c_1, \cdots , c_n/d_0$ is an instance of $(\rho)$ then
$d_1, \cdots , d_n / d_0$ is a derivation of height
$1 + max\{h_1, \cdots , h_n\}$.  We write $\mathcal{D} \entails S$ if
$\mathcal{D}$ is a derivation in $\LNS_\Kt$ of the linear nested
sequent $S$, and $\entails S$ if there is a derivation $\mathcal{D}$
with $\mathcal{D} \entails S$.

Note that our calculus is \emph{end-active}, i.e., in every logical
rule and every premiss, at least one active formula occurs in the last
component.

\begin{example}
  Consider the end-sequent
  $\seq{}{\wbx p, \wbx q, r \to\bbx\lnot\wbx\lnot r}$ where
  $r \to \bbx\lnot\wbx\lnot r$ is the axiom $r \to \bbx\wdi r$ with
  the definition of $\wdi$ expanded.  Suppose we apply the rule
  $(\to_R)$ upward to obtain
  $\seq{r}{\wbx p, \wbx q, \bbx \lnot\wbx\lnot r}$.  Then there are
  two different instances of the rule $\wbx_R^2$ using two different
  principal formulae, neither of which leads to a derivation,
  and one instance of the rule $\bbx_R^2$ which leads to a derivation:\\
\begin{minipage}[c]{0.5\linewidth}
\[
\AxiomC{
$\seq{r}{\wbx p, \wbx q, \bbx \lnot\wbx\lnot r} \fwd \seq{\epsilon}{p}$
}
\RightLabel{$\wbx_R^2$}
\UnaryInfC{$\seq{r}{\wbx p, \wbx q, \bbx \lnot\wbx\lnot r}$}
\DisplayProof
\]
\[
\AxiomC{
$\seq{r}{\wbx p, \wbx q, \bbx \lnot\wbx\lnot r} \fwd \seq{\epsilon}{q}$
}
\RightLabel{$\wbx_R^2$}
\UnaryInfC{$\seq{r}{\wbx p, \wbx q, \bbx \lnot\wbx\lnot r}$}
\DisplayProof
\]
\end{minipage}
\begin{minipage}[c]{0.5\linewidth}
\[
\AxiomC{~}
\RightLabel{id}
\UnaryInfC{
$\seq{r, \lnot r}{r, \wbx p, \wbx q, \bbx \lnot\wbx\lnot r}$
}
\RightLabel{$\lnot_L$}
\UnaryInfC{
$\seq{r, \lnot r}{\wbx p, \wbx q, \bbx \lnot\wbx\lnot r}$
}
\RightLabel{$\wbx_L^2$}
\UnaryInfC{
$\seq{r}{\wbx p, \wbx q, \bbx \lnot\wbx\lnot r} \bwd \seq{\wbx \lnot
  r}{\lnot \wbx \lnot r}$
}
\RightLabel{$\lnot_R$}
\UnaryInfC{
$\seq{r}{\wbx p, \wbx q, \bbx \lnot\wbx\lnot r} \bwd \seq{\epsilon}{\lnot\wbx\lnot r}$
}
\RightLabel{$\bbx_R^2$}
\UnaryInfC{$\seq{r}{\wbx p, \wbx q, \bbx \lnot\wbx\lnot r}$}
\DisplayProof
\]
\end{minipage}
\end{example}

Intuitively, each component of a linear nested sequent corresponds to
a world of a Kripke model, and the structural connectives $\fwd$ and
$\bwd$ between components corresponds to the relations $R$ and
$R^{-1}$ that connect these worlds.

These intuitions can be made formal since linear nested sequents have
a natural interpretation as formulae given by taking $\fwd$ and $\bwd$
to be the structural connectives corresponding to $\wbx$ and $\bbx$,
respectively:

\begin{definition}
  If $\Gamma = \{A_1, \cdots , A_n\}$ then we write $\hat{\Gamma}$ for
  $A_1 \land \cdots \land A_n$ and $\breve{\Gamma}$ for
  $A_1 \lor \cdots \lor A_n$. The \emph{formula translation} of a
  linear nested sequent is given recursively by
  $\tau(\Gamma \stt \Delta) = \hat{\Gamma} \to \breve{\Delta}$ and
  \[
    \tau(\Gamma \stt \Delta \fwd \mathcal{G}) = \hat{\Gamma} \to
    (\breve{\Delta} \lor \wbx\; \tau(\mathcal{G})) \quad
    \tau(\Gamma \stt \Delta \bwd \mathcal{G}) =
    \hat{\Gamma} \to (\breve{\Delta} \lor \bbx\;
    \tau(\mathcal{G}))\;.
  \]
  A sequent $S$ is \emph{falsifiable} if there exists a model
  $\langle W, R, v \rangle$ and a world $w \in W$ such that
  $w \not\forces \tau(S)$.  A sequent $S$ is \emph{valid} if it is not
  falsifiable.
\end{definition}

Soundness of the calculus then follows by induction on the depth of
the derivation from the following theorem.

\begin{theorem}[Soundness]\label{thm:soundness}
  For every rule, if the conclusion is falsifiable then so is one of
  the premisses.
\end{theorem}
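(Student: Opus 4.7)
The plan is to proceed by case analysis on the rule applied. It pays to record once a structural reformulation of falsifiability: unfolding $\tau$ shows that a linear nested sequent $\Sigma_0\stt\Pi_0\,\upd_1\,\Sigma_1\stt\Pi_1\,\upd_2\,\cdots\,\upd_n\,\Sigma_n\stt\Pi_n$ is falsified at a world $w_0$ of some model if and only if there is a \emph{falsifying chain}: worlds $w_0,w_1,\dots,w_n$ with $w_i\forces\hat{\Sigma_i}$ and $w_i\nforces\breve{\Pi_i}$ for every $i$, and with $w_{i-1}Rw_i$ if $\upd_i=\fwd$ and $w_iRw_{i-1}$ if $\upd_i=\bwd$. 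I would work with chains instead of models throughout: for each rule I assume a chain for the conclusion and construct one for some premiss.

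The easy cases come first. The axioms \idrl{} and $(\bot_L)$ are vacuously sound because the innermost component is classically valid and this propagates through the translation regardless of context. The rule $(\EW)$ is sound because its conclusion's translation is classically weaker than the premiss's: the new last component contributes only an extra disjunct $\wbx\,\tau(\Gamma\stt\Delta)$ or $\bbx\,\tau(\Gamma\stt\Delta)$ in the succedent position of the former last component. The propositional rules $(\to_L),(\to_R)$ act entirely inside the last component, so the premisses are falsified by the very same chain.

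The modal rules split into three groups. The rules $\wbx_L^1,\bbx_L^1,\wbx_R^2,\bbx_R^2$ are direct applications of the semantic clauses for $\wbx,\bbx$ in the direction matching the paired structural connective: for instance in $\wbx_R^2$ the final world $w_n$ of a chain for the conclusion satisfies $w_n\nforces\wbx A$, so I choose an $R$-successor $w_{n+1}$ with $w_{n+1}\nforces A$ and extend the chain to falsify the premiss (whose last component $\epsilon\stt A$ is joined by $\fwd$). The two-premiss rules $\wbx_R^1,\bbx_R^1$ use the same semantic fact with a case split on whether the required witness for $\nforces\wbx A$ (resp.\ $\nforces\bbx A$) can be chosen to be the neighbour already present in the chain through the existing $\bwd$ (resp.\ $\fwd$) connective, in which case the left premiss is falsified, or a genuinely fresh successor is required, in which case the right premiss is falsified.

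The main obstacle is the interaction rules $\wbx_L^2,\bbx_L^2$, which truncate the chain. Here the tense-logic interaction axioms $\wdi\bbx p\to p$ and $\bdi\wbx p\to p$ supply the essential semantic content. For $\bbx_L^2$, a chain for the conclusion ends with $\cdots w_{n-1}\,\fwd\,w_n$, so $w_{n-1}Rw_n$ and $w_n\forces\bbx A$; the clause for $\bbx$ (quantifying over $R$-predecessors) forces $w_{n-1}\forces A$, so the truncated chain $w_0,\dots,w_{n-1}$ with $A$ added to $w_{n-1}$'s antecedent obligations falsifies the premiss. The case of $\wbx_L^2$ is symmetric, with $\bwd$ encoding $R^{-1}$ so that $w_n\forces\wbx A$ forces $w_{n-1}\forces A$ via $\bdi\wbx p\to p$. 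Once these cases are dispatched the theorem is complete, and the usual induction on derivation depth then lifts it to full soundness of the calculus.
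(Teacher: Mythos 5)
Your proposal is correct and follows essentially the same route as the paper: unfold the translation into a falsifying chain of worlds, dispatch the propositional and one-directional modal rules directly, split the two-premiss rules $\wbx_R^1,\bbx_R^1$ on whether the witness for the falsified box coincides with the neighbour already in the chain, and use the converse reading of $\wbx/\bbx$ to push $A$ back to the predecessor world for the restart rules. The paper only writes out the cases $\wbx_R^1$ and $\bbx_L^2$, but its arguments there are exactly yours.
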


\begin{proof}
  We only give the interesting cases going beyond the standard
  calculi.

  For rule $\wbx_R^1$, suppose that for
  $\mathfrak{M} = \langle W,R,V \rangle$ and $w_1 \in W$ we have
  $\mathfrak{M},w_1 \nforces \transl[\Gamma_1 \sa \Delta_1 \upd \dots
  \upd \Gamma_n \sa \Delta_n \upd \Gamma \sa \Delta \bwd \Sigma \sa
  \Pi, \wbx A]$. Hence there are worlds $w_2,\dots,w_n,x,y \in W$ with
  $w_1 R^{\epsilon_1} w_2 R^{\epsilon_2} \dots R^{\epsilon_{n-1}} w_n
  R^{\epsilon_n} x R^{-1} y$, for $\epsilon_i$ empty or $-1$ as
  needed, such that
  $w_i \forces \hat{\Gamma_i} \land \neg \breve{\Delta_i}$ for every
  $i \leq n$, as well as
  $x \forces \hat{\Gamma} \land \neg \breve{\Delta}$ and
  $y \forces \hat{\Sigma} \land \neg \breve{\Pi} \land \neg \wbx
  A$. Hence there is a world $z \in W$ with $yRz$ such that
  $z \nforces A$. If $z = x$, then $\mathfrak{M},w_1$ falsifies the
  interpretation of the first premiss.  If $z \neq x$, we have a model
  falsifying the interpretation of the second premiss. The case of
  rule $\bbx_R^1$ is analogous.
  
  For the ``restart'' rule $(\bbx_L^2)$ , suppose that the conclusion
  $\mcalG \upd \seq{\Gamma}{\Delta} \fwd \seq{\Sigma, \bbx A}{\Pi}$ is
  falsifiable. Thus there is a world $w$ such that
  $w \not\forces \hat{\Gamma} \to \breve{\Delta} \lor \wbx
  (\hat{\Sigma} \land \bbx A \to \breve{\Pi})$.  So
  $w \forces \hat{\Gamma}$ and $w \not\forces \breve{\Delta}$ and $w$
  must have an $R$-successor $v$ such that $v \forces \hat{\Sigma}$
  and $v \not\forces \breve{\Pi}$ and $v \forces \bbx A$.  But then
  $w \forces A$, exactly as desired to conclude that the premiss
  $\seq{\Gamma, A}{\Delta}$ is falsifiable.\qed
\end{proof}

\begin{corollary}
  For every linear nested sequent $S$, if $\entails S$, then $\tau(S)$
  is valid.\qed
\end{corollary}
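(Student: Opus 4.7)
The plan is to derive the corollary as a direct contrapositive induction on derivation height, using Theorem~\ref{thm:soundness} as the engine. Concretely, I would show the stronger statement: if $\mathcal{D} \entails S$, then $S$ is valid (equivalently, $\tau(S)$ is valid in every Kripke model at every world).

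For the base case, the zero-height derivations are instances of $(\mathrm{id})$ and $(\bot_L)$. For $(\mathrm{id})$ I need to check that any sequent of the form $\mcalG \upd \Gamma, p \sa p, \Delta$ is valid: unwinding the translation $\tau$, the final component forces $p$ to appear both in a conjunctive antecedent and a disjunctive succedent, so no world can falsify it and the surrounding boxed context is irrelevant. For $(\bot_L)$ the analogous observation applies since $\bot$ in the antecedent makes the implication trivially true.

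For the inductive step, suppose the derivation $\mathcal{D}$ ends with an $n$-ary rule $(\rho)$ whose conclusion is $S$ and whose premisses $S_1,\dots,S_n$ each have strictly smaller derivation height. By the induction hypothesis, each $\tau(S_i)$ is valid, i.e., no $S_i$ is falsifiable. Now assume toward a contradiction that $\tau(S)$ is not valid; then $S$ is falsifiable by definition. Theorem~\ref{thm:soundness} then supplies some premiss $S_i$ which is falsifiable, contradicting the induction hypothesis. Hence $\tau(S)$ is valid.

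There is essentially no main obstacle: Theorem~\ref{thm:soundness} has been stated precisely in the rule-local contrapositive form needed to drive the induction, so the corollary reduces to a one-line contradiction at each inductive step. The only thing that needs a small sanity check is that the two axiomatic rules $(\mathrm{id})$ and $(\bot_L)$ really do produce valid sequents regardless of the surrounding context $\mcalG$ and of the direction markers $\fwd,\bwd,\upd$; this is immediate from the recursive definition of $\tau$, since validity of the innermost component suffices to validate the entire nested implication.
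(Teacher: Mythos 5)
Your proposal is correct and follows exactly the route the paper intends: the paper explicitly states that soundness ``follows by induction on the depth of the derivation'' from Theorem~\ref{thm:soundness}, which is precisely your contrapositive induction, and your separate check that $(\mathrm{id})$ and $(\bot_L)$ yield valid conclusions (with validity of the last component propagating outward through the $\wbx$/$\bbx$ prefixes of $\tau$) is the right way to handle the base case that the paper leaves implicit.
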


Why does the premiss of the rule $\bbx_L^2$ not contain the sequent
$\Sigma, \bbx A \stt \Pi$~? Because there may be an incompatibility
between $w$ and its $R$-successor $v$. The $\bbx_L^2$ rule removes
this incompatibility by propagating $A$ to the $R$-predecessor
$w$. But $A$ could be arbitrarily complex and we must again saturate
the predecessor before re-creating $v$. The current $v$ must be
deleted and we must ``restart'' $w$.

Before showing completeness of $\LNS_\Kt$ we remark on a
simplification of the calculus. Let $\LNS_\Kt^*$ be the calculus
obtained from $\LNS_\Kt$ by replacing the modal right rules
$\wbx_R^1, \bbx_R^1,\wbx_R^2$ and $\bbx_R^2$ with the following two
rules:
\[
  \infer[\wbx_R]{\mathcal{G} \upd \Gamma \sa \Delta, \wbx A
  }
  {\mathcal{G} \upd \Gamma \sa \Delta, \wbx A \fwd \epsilon\sa A
  }
  \qquad
  \infer[\bbx_R]{\mathcal{G} \upd \Gamma \sa \Delta, \bbx A
  }
  {\mathcal{G} \upd \Gamma \sa \Delta, \bbx A \bwd \epsilon\sa A
  }
\]
Soundness of these rules can be shown exactly as in
Thm.~\ref{thm:soundness}. Moreover, since derivations in the system
$\LNS_\Kt$ can be converted straightforwardly into derivations in the
system $\LNS_\Kt^*$ by simply omitting the subderivations of the left
premisses of $\wbx_R^1$ and $\bbx_R^1$ respectively, we immediately
obtain:

\begin{proposition}
  If $\LNS_\Kt$ is cut-free complete for $\Kt$, then so is
  $\LNS_\Kt^*$.\qed
\end{proposition}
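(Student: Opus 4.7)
The plan is to produce an explicit translation $t$ that takes every cut-free $\LNS_\Kt$ derivation $\mathcal{D}$ to a cut-free $\LNS_\Kt^*$ derivation $t(\mathcal{D})$ with the same end-sequent. Once this translation is in hand, the proposition is immediate: assuming $\LNS_\Kt$ is cut-free complete, every valid formula $A$ has a cut-free $\LNS_\Kt$ derivation of $\seq{}{A}$, whose image under $t$ is the required cut-free $\LNS_\Kt^*$ derivation of the same end-sequent.

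I would define $t$ by induction on the height of the input derivation. For any rule appearing in both calculi -- namely $\idrl$, $\bot_L$, $\EW$, the propositional rules $\to_L, \to_R$, and the four box-left rules $\wbx_L^1, \wbx_L^2, \bbx_L^1, \bbx_L^2$ -- I translate the premiss subderivations recursively and reapply the same rule. For an instance of $\wbx_R^2$ with conclusion $\mathcal{G} \fwd \Gamma \sa \Delta, \wbx A$ and premiss $\mathcal{G} \fwd \Gamma \sa \Delta, \wbx A \fwd \epsilon \sa A$, I recursively translate the premiss and then apply $\wbx_R$ (reading $\upd$ as $\fwd$); this is legitimate because the rule scheme $\wbx_R$ syntactically subsumes $\wbx_R^2$, both rules sharing the same premiss shape. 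Instances of $\bbx_R^2$ are handled symmetrically. For an instance of $\wbx_R^1$ with conclusion $\mathcal{G} \upd \Gamma \sa \Delta \bwd \Sigma \sa \Pi, \wbx A$ and right premiss $\mathcal{G} \upd \Gamma \sa \Delta \bwd \Sigma \sa \Pi, \wbx A \fwd \epsilon \sa A$, I discard the subderivation of the left premiss entirely and apply $\wbx_R$ to the recursively-translated right premiss; this works precisely because the right premiss of $\wbx_R^1$ coincides exactly with the unique premiss that $\wbx_R$ requires for the same conclusion. The $\bbx_R^1$ case is symmetric.

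Verification is routine: the translation preserves the end-sequent at every step by construction, never introduces a cut, and terminates because each recursive call is on a strictly shorter subderivation. The only substantive observation -- and the only place where one might worry about an obstacle -- is the design match between the right premiss of $\wbx_R^1$ (resp.\ $\bbx_R^1$) and the single premiss of $\wbx_R$ (resp.\ $\bbx_R$), which makes the left premisses of the old rules proof-theoretically redundant once the unified right rule is available. Since this match is witnessed by direct inspection of the rule shapes, no genuine technical obstacle arises.
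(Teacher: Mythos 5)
Your proposal is correct and is essentially the paper's own argument: the paper likewise obtains the proposition by converting $\LNS_\Kt$ derivations into $\LNS_\Kt^*$ derivations through omitting the subderivations of the left premisses of $\wbx_R^1$ and $\bbx_R^1$ (the right premiss matching the single premiss of the unified rule), with the remaining rules carried over unchanged. You merely spell out the induction on derivation height that the paper leaves implicit.
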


For technical reasons, in particular to facilitate a cut elimination
proof when the cut formula is principal in the rules $\wbx_L^2$ or
$\bbx_L^2$, in the following we take $\LNS_\Kt$ as the main system,
but it is worth keeping in mind that the completeness results
automatically extend to $\LNS_\Kt^*$. Note also that, modulo the
structural rules and deleting the last component in the rules
$\wbx_L^2$ and $\bbx_L^2$, $\LNS_\Kt^*$ is essentially a two-sided
linear end-active reformulation of the cut-free nested sequent
calculus $S2K_t$ for $\Kt$
in~\cite{DBLP:journals/sLogica/Kashima94}. Hence completeness of the
latter follows from our completeness results by transforming
derivations bottom-up.

\section{Completeness via proof search and counter-models}

We now show how to use our calculus (without $\EW$) for backward proof
search, and how to obtain a counter-model from failed proof search,
yielding completeness. For this, we separate the rules into groups,
assuming an appropriate side-condition to ensure that rules are
applied only when they create new formulae:
\begin{description}
\item[\em Termination Rules:] \idrl{} and $\bot_L$;
\item[\em CPL Rules:] $(\to_R)$ and $(\to_L)$. The side-conditions
  ensuring termination are: $A \not\in \Gamma$ or $B \not\in \Delta$
  for $(\to_R)$, and $B \not \in \Gamma$ and $A \not \in \Delta$ for
  $(\to_L)$;
\item[\em Propagation Rules:] $\wbx_L^1$ and $\bbx_L^1$. These rules
  move subformulae to the last component.  The side-condition ensuring
  termination is that $A \not \in \Sigma$;
\item[\em Restart rules:] $\wbx_L^2$ and $\bbx_L^2$. These rules make
  the sequent shorter. The side-condition ensuring termination is that
  $A \not\in\Gamma$;
\item[\em Box Rules:] $\wbx_R^1$, $\wbx_R^2$, $\bbx_R^1$
  $\bbx_R^2$. We apply only one of these rules, even if many are
  applicable, and backtrack over these choices. But these rules are
  non-deterministic since they choose a particular formula as
  principal. We must also back-track over all choices of principal
  formula in the chosen rule.
\end{description}
Our \emph{proof-search strategy} is to apply (backwards) the highest
rule in the above list. Thus, assuming that the \idrl{} rule is not
applicable, our strategy first seeks to saturate the final component
with the CPL-rules. Then we seek to propagate formulae from the
second-final component into the final component. Then we seek to
repair any incompatibilities between the final two components using
the Restart rules to shorten the sequent if necessary. Only when none
of these rules are applicable do we apply a Box-rule to lengthen the
sequent, and backtrack over all choices of principal formula.  In
particular, if a node is ``restarted'' then we have to redo all
previous Box-rule applications from this changed node.

Overall, the strategy means that the maximal modal degree, defined
standardly, of a formula in a component must decrease strictly as the
sequent becomes longer, and the restart rules, which shorten the
sequent, do not increase this maximal modal degree. A particular
component is restarted only a finite number of times because each
restart adds a formula which is a strict subformula of the
end-sequent, and there are only a finite number of these. Hence the
proof-search terminates.

\begin{theorem}[Termination]\label{thm:termination}
  Backward proof-search terminates.\qed
\end{theorem}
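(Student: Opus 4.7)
The plan is to exhibit a well-founded measure on proof-search configurations that strictly decreases with every rule application compatible with the strategy. Let $\Phi$ denote the finite set of subformulas of the end-sequent and let $d$ be the maximum modal depth of any formula in $\Phi$. Inspection of each rule shows that every formula appearing in a component of any reachable linear nested sequent lies in $\Phi$, so the antecedent and succedent of every component have cardinality bounded by $|\Phi|$.

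First I would bound the length of any reachable sequent. By the strategy, a box rule $\wbx_R^1, \wbx_R^2, \bbx_R^1, \bbx_R^2$ is fired only once all termination, CPL, propagation, and restart rules are inapplicable to the current last component. Each such rule creates a fresh trailing component whose sole formula $A$ is an immediate subformula of a principal formula $\wbx A$ or $\bbx A$ sitting in the previous component. Hence $\dpth{A}$ is strictly smaller than the maximum modal depth of the preceding component, so the modal depths of consecutive components strictly decrease towards the tail, bounding sequent length by $d+1$.

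Next I would argue that between two successive applications of a restart or box rule, only finitely many steps can occur. The side-conditions on \idrl{}, $\bot_L$, $(\to_R), (\to_L), \wbx_L^1, \bbx_L^1$ guarantee that every such application introduces a formula from $\Phi$ into the antecedent or succedent of some component without deleting anything. Since each component has bounded content and $\Phi$ is finite, the number of such applications between restart/box steps is bounded.

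Finally I would handle the restart rules $\wbx_L^2, \bbx_L^2$: each such step deletes all components beyond some penultimate one and adds a strict subformula $A \in \Phi$ to the antecedent of the new last component, with the side-condition $A \notin \Gamma$. Thus a component at position $k$ from the root can be the target of a restart only $|\Phi|$ times. Taking a lexicographic measure along components from left to right (first component's remaining restart budget, then its saturation budget, then the second component's, and so on) gives a well-founded ordinal which strictly decreases at every rule application. The main obstacle will be checking that rule applications deeper in the sequent do not reset the budgets of earlier components: this is where reading the lexicographic order from the root outward is essential, since shortening the sequent and later re-growing it past position $k$ still counts against position $k$'s fixed restart budget, so the overall measure remains monotone and termination follows.
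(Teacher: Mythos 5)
Your proposal is correct and follows essentially the same route as the paper: bounding sequent length by the strict decrease of maximal modal degree along components, bounding the number of restarts of a component by the finitely many strict subformulae of the end-sequent that can be added to its antecedent, and bounding the saturation steps in between. The paper leaves the well-founded measure implicit where you make it an explicit root-outward lexicographic order, but this is an elaboration rather than a different argument.
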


Suppose backward proof-search terminates without finding a derivation.
How do we construct a counter-model that falsifies the end-sequent?
Consider the search-space explored by our procedure, i.e., the space
of all possible failed derivations including the various backtracking
choice-points inherent in the search procedure. We visualise this
search space as a single tree by conjoining the modal rules containing
backtrack choices. E.g., the backtracking choices in the sequent
$\seq{\epsilon}{\wbx p, \wbx q, \wbx r}$ can be ``determinised'' as
below where we have used ``dotted'' lines to indicate a meta-level
conjunction which ``binds'' the three premisses:
\[
\begin{adjustbox}{max width=0.98\textwidth}
\AxiomC{
$\seq{\epsilon}{\wbx p, \wbx q, \bbx r} \fwd \seq{\epsilon}{p}$
\quad
$\seq{\epsilon}{\wbx p, \wbx q, \bbx r} \fwd \seq{\epsilon}{q}$
\quad
$\seq{\epsilon}{\wbx p, \wbx q, \bbx r} \bwd \seq{\epsilon}{r}$
}
\dottedLine
\UnaryInfC{$\seq{\epsilon}{\wbx p, \wbx q, \bbx r}$}
\DisplayProof
\end{adjustbox}
\]
Similarly, the sequent
$\mcalG \upd \seq{\Gamma_1}{\Delta_1 \bwd} \seq{\Gamma_2}{\wbx p, \bbx
  q}$ can be determinised as:
\[
\AxiomC{(a)}
\AxiomC{$\mcalG \upd \seq{\Gamma_1}{\Delta_1} \bwd \seq{\Gamma_2}{\wbx p,\bbx q}
  \bwd \seq{\epsilon}{q}$}
\dottedLine
\BinaryInfC{$\mcalG \upd \seq{\Gamma_1}{\Delta_1} \bwd
  \seq{\Gamma_2}{\wbx p, \bbx q}$}
\DisplayProof
\]
with (a) being the pair below:
\[
\begin{adjustbox}{max width=0.98\textwidth}
\AxiomC{$\mcalG \upd \seq{\Gamma_1}{p,\Delta_1} \bwd \seq{\Gamma_2}{\wbx p, \bbx q}$}
\AxiomC{$\mcalG \upd \seq{\Gamma_1}{\Delta_1 \bwd} \seq{\Gamma_2}{\wbx p, \bbx q} \fwd \seq{\epsilon}{p}$}
\LeftLabel{$\wbx_R^1$}
\RightLabel{$p \not\in \Delta_1$}
\BinaryInfC{(a)}
\DisplayProof
\end{adjustbox}
\]
We dub these choice-points as \emph{``and-nodes''} to distinguish them
from the traditional \emph{``or-nodes''} created by
disjunctions~\cite{DBLP:conf/tableaux/GoreW09}.
We first show how we prune this search space to keep only nodes useful
for building a counter-model. We then outline how the pruned search
space yields a counter-model for the end-sequent.

\subsection{Pruning irrelevant branches from the search space}

Suppose the original search-space corresponds to a tree $\tau_0$, and
consider some leaf to which no rule is applicable.  In this search
tree, delete all the rightmost components of the conclusion of a
restart rule.  We can do so because we know that, in the conclusion,
the second-last component is incompatible with the last component
precisely because its antecedent $\Gamma$ is missing $A$. So this pair
of components cannot possibly be part of a counter-model.

Now consider the rule application $(\rho)$ below the restart
rule. Suppose the last component of the premiss of $(\rho)$ is
$\seq{\Sigma, \wbx A}{\Pi}$.  If deleting $\seq{\Sigma,\wbx A}{\Pi}$
causes $(\rho)$ to become meaningless, then delete the last component
of the conclusion of $(\rho)$.  If the rule is binary or is an
``and-rule'' then we keep the shorter of the sequents that are
returned downward by this procedure.  E.g., an instance of the rule
$\wbx_L^2$ from Fig.~\ref{fig:rules-kb}, as below, now appears as
shown below it:
\[
\AxiomC{$\mcalG \upd \seq{A, \Gamma}{\Delta}$}
\RightLabel{$A \not\in \Gamma$}
\LeftLabel{$\wbx_L^2$}
\UnaryInfC{$\mcalG \upd \seq{\Gamma}{\Delta} \fwd \seq{\wbx A, \Sigma}{A}$}
\AxiomC{$\vdots$}
\UnaryInfC{$\mcalG \upd \seq{\Gamma}{\Delta} \fwd \seq{\wbx A, \Sigma}{B}$}
\LeftLabel{$\land_R$}
\BinaryInfC{$\mcalG \upd \seq{\Gamma}{\Delta} \fwd \seq{\wbx A, \Sigma}{A \land B}$}
\DisplayProof
\]
\[
\AxiomC{$\mcalG \upd \seq{A, \Gamma}{\Delta}$}
\UnaryInfC{$\mcalG \upd \seq{\Gamma}{\Delta}$}
\AxiomC{$\mcalG'$}
\BinaryInfC{$\mcalG''$}
\DisplayProof
\]
where $\mcalG'$ is the pruned version of
$\mcalG \upd \seq{\Gamma}{\Delta} \fwd \seq{\wbx A, \Sigma}{B}$ and
$\mcalG''$ is the shorter of $\mcalG \upd \seq{\Gamma}{\Delta}$ and
$\mcalG'$.  We can do so because the shorter sequent $\mcalG''$
restarts a component that is earlier in the order of expansion, hence
closer to the initial sequent.  Now proceed by considering the number
of restarts.

\begin{lemma}
  For all $\Gamma$ and $\Delta$, if $\seq{\Gamma}{\Delta}$ is not
  derivable and no restart rule is ever applied then there exists a
  Kripke model which falsifies $\Gamma \stt \Delta$.
\end{lemma}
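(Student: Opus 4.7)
The plan is to read off a counter-model directly from the (determinised) failed search tree $\tau_0$ produced by backward proof-search on $\seq{\Gamma}{\Delta}$. By Thm.~\ref{thm:termination} this tree is finite, and because $\seq{\Gamma}{\Delta}$ is not derivable at least one and-branch of $\tau_0$ ends in a saturated leaf that is neither an \idrl{} nor a $\bot_L$ instance. Crucially, since no restart rule is applied, every branch of $\tau_0$ only ever extends the linear nested sequent, so the components along a branch behave like a thread of worlds connected by $\fwd$ or $\bwd$ edges, and the only branching in $\tau_0$ is the and-branching introduced by the different principal-formula choices in the Box rules.

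From this I would define $\mathfrak{M} = \langle W, R, V \rangle$ by letting $W$ be the collection of all last components occurring anywhere in $\tau_0$, treated as distinct worlds indexed by their position in the tree. Declare $w R w'$ whenever, at some node of $\tau_0$, the component for $w'$ sits immediately after that for $w$ via a $\fwd$, or equivalently the component for $w$ sits immediately after that for $w'$ via a $\bwd$. Finally set $V(w,p) = \mathsf{true}$ iff $p$ occurs in the antecedent of the component associated to $w$. The and-node determinisation guarantees that every backtrack choice of principal formula in a Box rule materialises as its own sibling branch, hence as its own outgoing $R$-edge in $\mathfrak{M}$.

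The heart of the argument is a truth lemma: for every $w \in W$ with associated component $\seq{\Sigma}{\Pi}$, every $A \in \Sigma$ satisfies $w \forces A$ and every $A \in \Pi$ satisfies $w \nforces A$. I would prove this by induction on the complexity of $A$. The atomic and propositional cases follow from saturation of the CPL rules together with the fact that the saturated leaves are not \idrl{} or $\bot_L$ instances. For $\wbx A \in \Sigma$: since $\wbx_L^1$ is saturated, in every branch where $w$ has a $\fwd$-successor $w'$ the antecedent of $w'$ contains $A$, so the induction hypothesis gives $w' \forces A$ and thus $w \forces \wbx A$. For $\wbx A \in \Pi$: the strategy forces a $\wbx_R^1$ or $\wbx_R^2$ application with $\wbx A$ principal in some branch of $\tau_0$, and that application's failing premiss exhibits an $R$-successor $w'$ of $w$ whose succedent contains $A$; IH then gives $w' \nforces A$ and hence $w \nforces \wbx A$. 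The cases for $\bbx$ are entirely dual, using $\bbx_L^1$ and $\bbx_R^i$ together with $\bwd$-edges.

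The main obstacle will be the case $\wbx A \in \Pi$ when the rule actually applied is $\wbx_R^1$: this rule has two premisses, one propagating $A$ back along an existing $\bwd$-edge (i.e.\ onto a world that is already an $R$-successor of $w$) and one creating a fresh $\fwd$-successor carrying $A$ in the succedent. I must argue that whichever of these is non-derivable supplies the required semantic witness $w'$, and that both possibilities are in fact covered by the way $R$ has been defined from the tree. The end-active strategy together with Box Rules being lowest in the priority list ensures that all propagations have been exhausted before a Box rule fires, so the $R$-successors posited by $\mathfrak{M}$ coincide with those enumerated by the premisses of $\wbx_R^1$ and $\wbx_R^2$. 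A small additional verification is that identifying each occurrence of a component as a separate world is sound: since $\tau$ reads each component locally and $R$ is defined purely from the incident structural connectives, no cross-branch identification is needed, and soundness of the truth lemma propagates cleanly up the induction.
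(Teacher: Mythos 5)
Your overall skeleton matches the paper's: read the model off the determinised failed search tree, take components as worlds, turn $\fwd$ and $\bwd$ into $R$ and $R^{-1}$, make antecedent formulae true and succedent formulae false, and verify everything using the saturation conditions of the strategy. The paper phrases the verification as a local compatibility analysis of adjacent triples of leaf components rather than as an induction on formula complexity, but that difference is presentational.

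The genuine gap is in your treatment of the and-branching. You take $W$ to be \emph{all occurrences} of components, ``treated as distinct worlds indexed by their position in the tree'', and you explicitly assert that ``no cross-branch identification is needed''. This breaks the box-right cases of your truth lemma and in fact breaks falsification of the end-sequent itself. For $\seq{\epsilon}{\wbx p, \wbx q, \bbx r}$ the and-node produces three sibling branches, one per choice of principal formula, and in your model each occurrence of the root component acquires exactly one $R$-neighbour, namely the one created in its own branch: the occurrence under the $\wbx p$-branch has a successor falsifying $p$ but no successor falsifying $q$ and no predecessor falsifying $r$. For non-atomic boxed formulae nothing forces that lone neighbour to falsify them, and your clause ``the strategy forces a $\wbx_R^1$ or $\wbx_R^2$ application with $\wbx A$ principal in some branch'' only yields a witness that is an $R$-successor of a \emph{different} occurrence of the component, not of $w$. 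The missing step---which is precisely what the paper's proof supplies---is to observe that sibling branches of an and-node share their prefix of components (saturation by the CPL and propagation rules is completed before any Box rule fires, and the Box rules only extend the sequent or add the side formula of the left premiss of $\wbx_R^1$/$\bbx_R^1$ to the prefix), so the copies of a shared component across siblings must be \emph{identified} and the branch-local witnesses glued into a single fan of $R$-successors and $R$-predecessors around that one world. Without this gluing the construction does not produce a counter-model. Relatedly, the ``main obstacle'' you flag about the two premisses of $\wbx_R^1$ is left unresolved; the paper resolves it by noting that the left premiss covers exactly the case where the required witness is the already-present $\bwd$-predecessor, whose succedent then receives $A$ in the leaf.
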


\begin{proof}
  If no restart rules are applied in backward proof-search, then every
  application of a Box-right-rule leads to a new component which is
  compatible with its parent component in that every required formula
  is already in the latter.

  Now consider any three adjacent components of a leaf sequent, which must
  be of one of the following forms where the second-last component and the
  third-last component are separated by $\fwd$ (we skip the 
  similar cases when it is
  $\bwd$):

  \begin{description}
  \item[\rm (1)] 
    $\mcalG \upd \seq{\Gamma_0, \Sigma_1, \wbx\Sigma_5}{D_l,\Delta_0}\fwd$
    \\
    $~~~~~~~~~~~~~~~~~~~~~~~~\seq{\Gamma_1, \bbx \Sigma_1, \Sigma_2, \wbx\Sigma_3,\Sigma_5,\Sigma_4}{\Delta_1, \bbx A_i, \wbx B_j, \wbx C_k, \bbx D_l}$
    \\
    $~~~~~~~~~~~~~~~~~~~~~~~~\bwd \seq{\Gamma_2, \Sigma_1, \wbx
      \Sigma_2}{\Delta_2, A_i}  \upd \mcalH$
  \item[\rm (2)] 
    $\mcalG \upd \cdots\cdots\cdots\cdots\cdots\cdots\cdots\fwd$
    \\
    $~~~~~~~~~~~~~~~~~~~~~~~~\cdots\cdots\cdots\cdots\cdots\cdots\cdots
    \cdots\cdots\cdots\cdots\cdots\cdots\cdots\cdots\cdots\cdots$
    \\
    $~~~~~~~~~~~~~~~~~~~~~~~\fwd
    \seq{\Gamma_2',\Sigma_3,\bbx\Sigma_4}{\Delta_2', C_k}  \upd
    \mcalH$
  \item[\rm (3)] 
    $\mcalG \upd \cdots\cdots\cdots\cdots\cdots\cdots\cdots\fwd$
    \\
    $~~~~~~~~~~~~~~~~~~~~~~~~\cdots\cdots\cdots\cdots\cdots\cdots\cdots
    \cdots\cdots\cdots\cdots\cdots\cdots\cdots\cdots\cdots\cdots$
    \\
    $~~~~~~~~~~~~~~~~~~~~~~~\fwd
    \seq{\Gamma_2'',\Sigma_3,\bbx\Sigma_4}{\Delta_2'', B_j}  \upd \mcalH$
  \end{description}
  In (1), the final component is the right premiss of the rule
  $\bbx_R^1$ on $\bbx A_i$, so $\bbx A_i$ is ``fulfilled''.  The rule
  $\wbx_L^2$ is not applicable to the last component because
  $\Sigma_2$ is in the middle component.  The rule $\bbx_L^1$ is not
  applicable on the middle component because $\Sigma_1$ is in the last
  component. The rule $\bbx_L^2$ is not applicable to the middle
  component because $\Sigma_1$ is also in the first component.  The
  rule $\wbx_L^1$ is not applicable on the first component because
  $\Sigma_5$ is in the middle component.  The $\bbx D_l$ in the middle
  component is fulfilled because the first component contains $D_l$
  via the left premiss of $\bbx_R^1$.

  The two formulae $\wbx B_j$ and $\wbx C_k$ in the middle component
  are not fulfilled by (1). But there will be an application of
  $\wbx_R^2$ on $\wbx C_k$ shown as (2), and another similar instance
  on $\wbx B_j$ with $C_k$ in the last component replaced by $B_j$.
  Rule $\bbx_L^2$ is not applicable on the last component because
  $\Sigma_4$ is in the middle one.  The rule $\wbx_L^1$ is not
  applicable on the middle component because $\Sigma_3$ is in the last
  one.

  These arguments apply for every $\wbx$-formula and for every
  $\bbx$-formula in the second-last component.  Moreover, for every
  conjunction in the succedent of either component, at least one
  conjunct must be in that succedent. Similarly, for every disjunction
  in the succedent of either component, both disjuncts must be in that
  succedent.  Finally, the \idrl{} rule is not applicable to any
  component.

  Now put the following valuation on these components: every formula
  in the antecedent has a value of ``true'' and every formula in the
  succedent has a value of ``false''. Then replace every occurrence of
  $\fwd$ with $R$ and replace every occurrence of $\bwd$ with
  $R^{-1}$.  Thus we have the following picture:
  \[
    \xymatrix@=0.4cm{
      w_j: \seq{\Gamma_2'', \Sigma_3, \bbx\Sigma_4}{\Delta_2'', B_j}
      & 
      w_k: \seq{\Gamma_2',\Sigma_3,\bbx\Sigma_4}{\Delta_2', C_k}
      \\
      v: \Gamma_1, \bbx \Sigma_1, \Sigma_2, \wbx\Sigma_3, \Sigma_5,\Sigma_4
      \qquad\seq{}{}
      \ar[u]^{R}
      & {\Delta_1, \bbx A_i,\wbx B_j, \wbx C_k,\bbx D_l}
      \ar[u]_{R}
      \\
      u: \seq{\Gamma_0, \Sigma_1, \wbx\Sigma_5}{D_l,\Delta_0}
      \ar[u]_{R}
      & 
      w_i: \seq{\Gamma_2, \Sigma_1, \wbx\Sigma_2}{\Delta_2, A_i}
      \ar[u]^{R}
    }
  \]

  For every world $v$, every formula $\bbx A_i$ and every formula
  $\wbx C_k$ with $v \not\forces \bbx A_i$ and
  $v \not\forces \wbx C_k$, there exists a predecessor world $u_i$
  with $u_i R v$ and $u_i \not\forces A_i$, there exists a successor
  world $w_k$ with $v R w_k$ and $w_k \not\forces C_k$; for every
  formula $\wbx B \in \wbx\Sigma_2$ with $u_i \forces \wbx B$, we have
  $v \forces B$; and for every formula $\bbx D \in \bbx\Sigma_4$ with
  $w_k \forces \bbx D$, we have $v \forces D$.  Hence, the triple
  $u_i R v R w_k$ is mutually compatible in terms of both modalities,
  and each world falsifies the associated component. Similar triples
  exist for all the box-formulae in $v$ which are not principal in the
  diagram, and they all ``overlap'' at $v$. Hence we can ``glue'' them
  together to form the fan of R-successors and R-predecessors of $v$,
  maintaining global compatibility.  The original sequent
  $\seq{\Gamma}{\Delta}$ is thus falsified at its associated
  world. \qed
\end{proof}

\begin{lemma}\label{lem:countermodel}
  For every $\Gamma$ and $\Delta$, if the sequent
  $\seq{\Gamma}{\Delta}$ is not derivable, and contains restarts, then
  there is a Kripke model which falsifies the end-sequent.
\end{lemma}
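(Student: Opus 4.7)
The plan is to prove the lemma by induction on the number $n$ of restart-rule applications occurring in the failed backward-search tree for $\seq{\Gamma}{\Delta}$. The base case $n=0$ is exactly the preceding lemma, since a failed search with zero restarts falls into the previous setting directly.

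For the inductive step with $n+1$ restarts, I would pick a \emph{topmost} restart application, so that the sub-search above its premiss is itself restart-free. Say, for concreteness, it is an instance of $\wbx_L^2$ with conclusion $\mcalG \upd \seq{\Gamma_i}{\Delta_i} \bwd \seq{\Sigma, \wbx A}{\Pi}$ and premiss $\mcalG \upd \seq{\Gamma_i, A}{\Delta_i}$ (the $\bbx_L^2$ case is symmetric). I would then apply the pruning procedure described immediately before the lemma: delete the rightmost component of the restart's conclusion, and propagate that deletion downward through the search tree, collapsing rules whose activity in the deleted component vanishes, and at each binary or and-branching point retaining the shorter of the two residual sequents returned from below. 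The justification for deleting the rightmost component is exactly the one given in the preamble: the incompatibility between its antecedent and the predecessor antecedent's missing $A$ means this pair of components cannot be realised in any Kripke model.

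The result of pruning is a new failed-search tree $\tau'$ rooted at a sequent $S'$ of the form $\seq{\Gamma'}{\Delta}$ with $\Gamma \subseteq \Gamma'$ (the additional formulae in $\Gamma'$ coming from restarts whose propagation has now reached the root level), and containing strictly fewer restart applications than the original. By the induction hypothesis applied to $\tau'$, there is a Kripke model falsifying $S'$; and since a counter-model for an antecedent-extension of a sequent is also a counter-model for the original, this model falsifies $\seq{\Gamma}{\Delta}$ as required.

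The main obstacle I anticipate is verifying that the pruning procedure genuinely yields a valid failed-search tree at each step: every surviving rule instance must remain a legitimate application of some rule of $\LNS_\Kt$ with its saturation side-condition still satisfied, every leaf must remain irreducible and non-initial, and the final root must indeed be of the form $\seq{\Gamma'}{\Delta}$. The delicate point is the interaction of the pruning with the propagation rules $\wbx_L^1$ and $\bbx_L^1$, which may have moved formulae into the component marked for deletion; such instances become vacuous and must themselves be deleted, potentially exposing further pruning opportunities and requiring care to ensure that the induction measure still strictly decreases. Once this bookkeeping is discharged, the reduction to the previous lemma proceeds cleanly.
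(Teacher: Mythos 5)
Your proposal follows essentially the same route as the paper: induction on the number of restarts, with the base case given by the preceding lemma, the inductive step handled by taking the topmost restart, invoking the pruning/deletion strategy described before the lemma, and concluding because the restart's premiss is an antecedent-superset of its conclusion, so its counter-model also falsifies the smaller sequent. The bookkeeping concerns you raise about pruning are real but are glossed over by the paper as well, so there is no substantive divergence.
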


\begin{proof}
  We proceed by induction on the number of restarts. If there are
  none, then we are done by the previous lemma. Else there are a
  finite number of restarts.
  
  Consider the highest restart and suppose it is $\wbx_L^2$. By our
  deletion strategy, it must look exactly as shown above. By the
  induction hypothesis, the premiss must have a counter-model. But the
  premiss is a strict superset of the conclusion, so the same model
  must falsify the conclusion. \qed
\end{proof}

\begin{example}
  Consider the end-sequent $\seq{\epsilon}{\wbx p, \wbx q, \bbx
    r}$. We would need two successor worlds, falsifying $p$ and $q$
  respectively, and one predecessor world falsifying $r$. One failed
  derivation will come from
  $\seq{\epsilon}{\wbx p, \wbx q, \wbx r} \fwd \seq{\;}{p}$ while
  another will come from
  $\seq{\epsilon}{\wbx p, \wbx q, \bbx r} \fwd \seq{\;}{q}$, i.e., two
  instances of the $\wbx_R^2$-rule. But there will also be a failed
  derivation from
  $\seq{\epsilon}{\wbx p, \wbx q, \bbx r} \bwd \seq{\;}{r}$, i.e., an
  instance of the $\bbx_R^2$-rule. Moreover, if $r := \lnot\wbx r'$
  then the failed derivation of this last mentioned sequent will have
  a backward application of $\wbx_L^2$ above it, containing a failed
  derivation for $\seq{r'}{\wbx p, \wbx q, \bbx \lnot\wbx r'}$,
  thereby ensuring compatibility.  But there will also be failed
  derivations for $\seq{r'}{\wbx p, \wbx q, \wbx r} \fwd \seq{\;}{p}$
  and $\seq{r'}{\wbx p, \wbx q, \bbx r} \fwd \seq{\;}{q}$ and the
  witnesses for $\wbx p$ and $\wbx q$ will come from these failed
  derivations, because the returned sequent
  $\seq{r'}{\wbx p, \wbx q, \bbx \lnot\wbx r'}$ will be shorter than
  the other ``and-node'' premisses
  $\seq{\epsilon}{\wbx p, \wbx q, \wbx r} \fwd \seq{\;}{p}$ and
  $\seq{\epsilon}{\wbx p, \wbx q, \bbx r} \fwd \seq{\;}{q}$.  But note
  that a counter-model for
  $\seq{r'}{\wbx p, \wbx q, \bbx \lnot\wbx r'}$ is also a
  counter-model for the end-sequent
  $\seq{\epsilon}{\wbx p, \wbx q, \bbx \lnot\wbx r'}$.
\end{example}

Putting Thm.~\ref{thm:termination} and Lem.~\ref{lem:countermodel}
together we then obtain cut-free completeness:

\begin{theorem}[Cut-free Completeness]
  If backward proof-search on end-sequent $S$ fails to find a
  derivation then there is a counter-model for $S$.\qed
\end{theorem}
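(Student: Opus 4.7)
The plan is to assemble the two preceding results. First, Theorem~\ref{thm:termination} guarantees that backward proof-search on the end-sequent $S$ halts after finitely many steps. If it halts without finding a derivation, then by the design of the strategy---which systematically exhausts the applicable rules at each stage and backtracks over all non-deterministic choices, in particular the choice of principal formula in a Box rule---$S$ admits no derivation in $\LNS_\Kt$ that the strategy could have constructed.

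Second, I would invoke Lemma~\ref{lem:countermodel}. That lemma takes a non-derivable end-sequent $\seq{\Gamma}{\Delta}$, possibly with restart applications in the failed search tree, and produces a falsifying Kripke model by induction on the number of restarts, with the restart-free base case handled by the preceding lemma via the explicit ``gluing'' construction around each node $v$. Applying it to $S$ yields the desired counter-model, and the theorem then follows by chaining: Theorem~\ref{thm:termination} gives termination; the terminated search tree represents a failed derivation attempt; so Lemma~\ref{lem:countermodel} yields a counter-model.

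The main obstacle, as is usual for completeness arguments of this style, is ensuring a tight bridge between ``proof-search fails'' and ``$S$ is not derivable in $\LNS_\Kt$''. This reduces to two observations: the termination rules, CPL rules, propagation rules, and restart rules are either forced or invertible, so any derivation of $S$ in $\LNS_\Kt$ can be mirrored by the strategy up to a permutation of rule applications; the Box rules are non-invertible, but the and-node/or-node backtracking visualised in the preceding search-space discussion exhausts every choice of principal formula and every structural alternative. Once this permutation/backtracking argument is in hand, the theorem is an immediate corollary of Theorem~\ref{thm:termination} and Lemma~\ref{lem:countermodel}, matching the one-line proof sketch that the authors allude to.
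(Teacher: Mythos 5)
Your proposal is correct and follows exactly the paper's own argument: the theorem is obtained by combining Thm.~\ref{thm:termination} (termination of proof-search) with Lem.~\ref{lem:countermodel} (counter-model extraction from a failed search tree, by induction on the number of restarts with the restart-free lemma as base case). The additional remarks you make about invertibility and exhaustive backtracking over Box-rule choices are a reasonable elaboration of why the failed search space suffices, but they are not part of the paper's (one-line) proof.
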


\begin{corollary}
  If $\varphi$ is valid then the end-sequent $\seq{\epsilon}{\varphi}$
  is derivable.\qed
\end{corollary}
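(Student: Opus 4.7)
The plan is to prove the corollary by contraposition using the Cut-free Completeness theorem just established. Assume $\varphi$ is valid; I want to conclude that $\seq{\epsilon}{\varphi}$ is derivable in $\LNS_\Kt$.

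First I would argue contrapositively: suppose $\seq{\epsilon}{\varphi}$ is not derivable. By the Termination theorem (Thm.~\ref{thm:termination}), backward proof-search on this end-sequent terminates. Since no derivation exists, backward proof-search must terminate in failure (the proof-search strategy is complete for derivability, in the sense that the search tree explored covers all relevant rule applications with the strategy and backtracking described). Then by the Cut-free Completeness theorem, there is a Kripke counter-model for $\seq{\epsilon}{\varphi}$, i.e.\ a model $\langle W, R, V\rangle$ and a world $w \in W$ with $w \nforces \transl[\seq{\epsilon}{\varphi}]$.

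The second step is to unwind the formula translation. By definition, $\transl[\seq{\epsilon}{\varphi}] = \hat{\epsilon} \to \breve{\{\varphi\}}$; since $\hat{\epsilon}$ is the empty conjunction (interpreted as $\top$) and $\breve{\{\varphi\}} = \varphi$, this translation is semantically equivalent to $\varphi$ itself. Consequently $w \nforces \varphi$, contradicting the assumption that $\varphi$ is valid. Hence $\seq{\epsilon}{\varphi}$ must be derivable.

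The only mildly delicate point is step one, namely the implicit appeal to the fact that the proof-search strategy is ``complete'' in the sense used by the Cut-free Completeness theorem, i.e.\ that terminated search failing to yield a derivation is equivalent to there being no derivation at all. This is fine since the search explores all rule applications permitted by the side-conditions (with backtracking over principal formula choices in the Box-rules), so a derivation exists iff the search finds one. Beyond this minor bookkeeping, the corollary is an immediate consequence of the preceding completeness theorem.
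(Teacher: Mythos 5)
Your proof is correct and matches the paper's intent exactly: the paper leaves the corollary as an immediate consequence (marked \qed) of the Cut-free Completeness theorem together with termination of proof-search, which is precisely the contrapositive argument you spell out, including the observation that $\transl[\seq{\epsilon}{\varphi}]$ is semantically equivalent to $\varphi$. Nothing further is needed.
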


It is tempting to think that we need some sort of coherence condition
as illustrated by the tree in Fig.~\ref{fig:coherence-example1}:
    
\begin{figure}[t]
  \hrule
  \smallskip
  \[
    \infer=[]{\epsilon \stt\underbrace{\wbx(\underbrace{(p \land \neg p) \lor \neg \bbx\wbx q}_{\varphi}), \wbx(\underbrace{(r \land \neg r)
    \lor \neg \bbx \wbx s}_{\psi})}_{\Gamma}
    }
    {\infer[\land_R]{\;\stt \Gamma \fwd \bbx \wbx q \stt[\phantom{\wbx \varphi}] p
        \land \neg p
      }
      {\infer[]{\;\stt \Gamma \fwd \bbx \wbx q \stt[\phantom{\wbx\varphi}] p
        }
        {\infer=[]{\wbx q \stt \Gamma
          }
          {\infer[]{\wbx q \stt \Gamma \fwd \bbx \wbx s \stt[\phantom{\wbx \psi}] r \land \neg r
            }
            {\infer[\land_R]{\wbx q \stt \Gamma \fwd \bbx \wbx s, q \stt[\phantom{\wbx \psi}] r \land \neg r
              }
              {\dots
                &
                \infer[]{\wbx q \stt \Gamma \fwd \bbx \wbx s, q \stt[\phantom{\wbx\psi}] \neg r
                }
                {\infer=[]{\wbx q, \wbx s \stt \Gamma
                  }
                  {\infer[\land_R]{\wbx q, \wbx s \stt \Gamma \fwd \bbx \wbx
                      q, q, s \stt[\phantom{\wbx \varphi}] p \land \neg p
                    }
                    {\dots
                      &
                      \infer*[]{\wbx q, \wbx s \stt \Gamma \fwd \bbx \wbx
                      q, q, s \stt[\phantom{\wbx \varphi}] \neg p
                      }
                      {
                      }
                    }
                  }
                }
              }
            }
          }
        }
        &
        \dots
      }
    }
  \]
  \hrule
  \caption{}\label{fig:coherence-example1}
\end{figure}
In the lowermost application of $\land_R$ we choose the left premiss,
and in the uppermost one the right one. Thus it seems that in the
world corresponding to these last components we would need to make
both $p$ and $\neg p$ true, which of course would not work. But our
pruning turns this failed derivation tree into the tree in
Fig.~\ref{fig:coherence-example2}.
  \begin{figure}[t]
    \hrule
    \smallskip
  \[
    \infer=[]{\epsilon \stt\underbrace{\wbx(\underbrace{(p \land \neg p) \lor \neg \bbx\wbx q}_{\varphi}), \wbx(\underbrace{(r \land \neg r)
    \lor \neg \bbx \wbx s}_{\psi})}_{\Gamma}
    }
    {\infer[]{\;\stt \Gamma 
      }
      {\infer[]{\;\stt \Gamma
        }
        {\infer=[]{\wbx q \stt \Gamma
          }
          {\infer[]{\wbx q \stt \Gamma
            }
            {\infer[]{\wbx q \stt \Gamma
              }
              {\dots
                &
                \infer[]{\wbx q \stt \Gamma
                }
                {\infer=[]{\wbx q, \wbx s \stt \Gamma
                  }
                  {\infer[\land_R]{\wbx q, \wbx s \stt \Gamma \fwd \bbx \wbx
                      q, q, s \stt[\phantom{\wbx \varphi}] p \land \neg p
                    }
                    {\dots
                      &
                      \infer*[]{\wbx q, \wbx s \stt \Gamma \fwd \bbx \wbx
                      q, q, s \stt[\phantom{\wbx \varphi}] \neg p
                      }
                      {
                      }
                    }
                  }
                }
              }
            }
          }
        }
        &
        \dots
      }
    }
  \]
  \hrule
  \caption{}\label{fig:coherence-example2}
\end{figure}
Note that only the component which is not restarted survives the
pruning.  The previous incarnation of the component caused the
restart, but the restarted node did not necessarily follow the same
sequence of rule applications, once it was restarted. Indeed, the
sequence may no longer be possible as it may lead to an instance of
\idrl. Of course, if it is possible and remains open, then it will
find a counter-model for a larger set, which will also suffice for the
smaller set.  Thus our backward proof-search procedure creates
surviving successors/predecessors only when it has ensured that they
will be compatible via some number of restarts. Their incarnations
which are not compatible are irrelevant, and are deleted by our
counter-model construction.

\section{Completeness via Cut elimination}
\label{sec:cut-elimination}

We now provide an alternative proof of cut-free completeness of our
calculus via syntactic cut elimination. The proof is interesting from
a technical point of view: The additional left premiss in the rules
$\wbx_R^1$ and $\bbx_R^1$ is introduced specifically as a counterpart
to the restart rules $\wbx_L^2$ and $\bbx_L^2$ to facilitate the
reduction of cuts on boxed formulae to cuts of smaller complexity.
However, while this enables the cut elimination proof itself, it
shifts a large part of the work in the completeness proof to a perhaps
unexpected place: the proof of admissibility of necessitation.

The following two lemmata are shown straightforwardly by induction on
the depth of the derivation and the complexity of the formula $A$,
respectively:

\begin{lemma}
  The rules  below are admissible in $\LNS_\Kt$:
  \[
    \infer[\W]{\mathcal{G} \upd \Gamma, \Sigma \sa \Delta, \Pi \upd \mathcal{H}
    }
    {\mathcal{G} \upd \Gamma \sa \Delta \upd \mathcal{H}
    }
    \quad
    \infer[\ConL]{\mathcal{G} \upd \Gamma, A \sa \Delta \upd \mathcal{H}
    }
    {\mathcal{G} \upd \Gamma, A, A \sa \Delta \upd \mathcal{H}
    }
    \quad
    \infer[\ConR]{\mathcal{G} \upd \Gamma \sa \Delta, A \upd \mathcal{H}
    }
    {\mathcal{G} \upd \Gamma \sa \Delta, A, A \upd \mathcal{H}
    }
  \]
\qed
\end{lemma}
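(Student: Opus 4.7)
The plan is a routine depth-preserving induction on the derivation of the premiss, carried out uniformly for all three rules. In the base case the premiss is an instance of $(\idrl)$ or $(\bot_L)$: adding arbitrary side material still leaves an axiom, and contracting a duplicated atom in $(\idrl)$ or a duplicated $\bot$ in $(\bot_L)$ yields another instance of the same axiom.

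For $\W$, in the inductive step I case-split on the last rule $(\rho)$. Because $\LNS_\Kt$ is end-active and every logical rule only demands the presence of a specific principal formula, tolerating arbitrary extra side material in antecedent and succedent, applying the induction hypothesis to the premiss(es) with the weakening inserted into the appropriate component yields derivations to which $(\rho)$ can be reapplied. The structural rule $\EW$ is handled analogously, either by reapplying it (when weakening occurs in the component it introduces) or by applying the induction hypothesis to its single premiss. The only mild bookkeeping concerns rules that alter the sequent length: the box-right rules $\wbx_R^1, \wbx_R^2, \bbx_R^1, \bbx_R^2$ extend the sequent by a new last component, while $\wbx_L^2, \bbx_L^2$ truncate it, so the component being weakened sits at a slightly different index in each premiss. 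Since the rule is stated with an arbitrary $\upd$-context on both sides, this is harmless.

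For $\ConL$ and $\ConR$ I additionally case-split on whether the contracted formula $A$ is principal in $(\rho)$. If $A$ is not principal, the induction hypothesis applies directly to the premiss(es) and $(\rho)$ is reapplied. If $A$ is principal in $\to_L, \to_R$ or in one of the propagation rules $\wbx_L^1, \bbx_L^1$, the principal formula is preserved in the relevant premiss, so the duplicate copy can be removed by the induction hypothesis before reapplying $(\rho)$. For the box-right rules the principal $\wbx A$ (resp.\ $\bbx A$) likewise persists in the (possibly relocated) succedent of each premiss. The slightly delicate case is the restart rules $\wbx_L^2, \bbx_L^2$: a duplicated principal $\wbx A$ or $\bbx A$ in the deleted last component has no counterpart in the premiss, so the same rule is simply reapplied to the unchanged premiss to obtain the contracted conclusion directly.

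The main obstacle I anticipate is not any individual reduction but the need to formulate the induction hypothesis uniformly enough so that the structural step may target any component of the linear nested sequent --- inside $\mathcal{G}$, in the distinguished middle position, or inside $\mathcal{H}$ --- matching the arbitrary-context formulation of the displayed rules. Once this uniform statement is fixed the case-analysis is mechanical, which is consistent with the authors' description of the argument as straightforward.
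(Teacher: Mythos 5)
Your proposal is correct and matches the paper's approach: the paper proves this lemma exactly by the induction on the depth of the derivation that you describe, relying on the Kleene-style (cumulative) formulation of the rules so that principal formulae persist into the premisses and no inversion lemmas are needed. The paper does not spell out the case analysis, but your treatment of the length-changing rules (in particular reapplying $\wbx_L^2$/$\bbx_L^2$ directly to the unchanged premiss when the affected component is the deleted one) is the right way to fill in the only non-routine detail.
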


\begin{lemma}\label{lem:generalised-init}
  The \emph{generalised initial sequent rule} shown below is derivable
  in $\LNS_\Kt$:
  \[
    \infer[]{\mathcal{G} \upd \Gamma, A \sa A, \Delta}{}
  \]
  \qed
\end{lemma}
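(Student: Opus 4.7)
The plan is to proceed by induction on the complexity of $A$, exhibiting a derivation directly. The base cases are immediate: for $A$ atomic, \idrl{} already has the required shape, and for $A = \bot$ the rule $\bot_L$ derives $\mathcal{G} \upd \Gamma, \bot \sa \Delta$ for any $\Delta$, in particular for one containing $\bot$. For the implication step $A = B \to C$, apply $\to_R$ bottom-up to reduce to $\mathcal{G} \upd \Gamma, B \to C, B \sa B \to C, C, \Delta$, and then apply $\to_L$ with this $B \to C$ as principal; the two resulting premises each contain a copy of $B$ (respectively $C$) on both sides of the turnstile in the last component and close by the induction hypothesis.

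The inductive steps $A = \wbx B$ and $A = \bbx B$ are the interesting cases. I treat $\wbx B$; the case $\bbx B$ is entirely dual under the swap $\fwd \leftrightarrow \bwd$ and $\wbx \leftrightarrow \bbx$. When the context immediately preceding the last component is either empty or is $\fwd$, I apply $\wbx_R^2$ to spawn a fresh forward component $\epsilon \sa B$, then invoke $\wbx_L^1$ on the copy of $\wbx B$ now sitting in the second-last component, turning the last component into $B \sa B$, which closes by the induction hypothesis on $B$.

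The main obstacle is the remaining subcase, where the last component is preceded by $\bwd$, so the target has the form $\mathcal{G}'' \upd \Gamma_0 \sa \Delta_0 \bwd \Gamma, \wbx B \sa \wbx B, \Delta$. Here $\wbx_R^2$ is not applicable and one must use the two-premise rule $\wbx_R^1$. The right premise $\mathcal{G}'' \upd \Gamma_0 \sa \Delta_0 \bwd \Gamma, \wbx B \sa \wbx B, \Delta \fwd \epsilon \sa B$ is dispatched exactly as in the previous paragraph via $\wbx_L^1$ and the induction hypothesis. The delicate left premise $\mathcal{G}'' \upd \Gamma_0 \sa \Delta_0, B \bwd \Gamma, \wbx B \sa \wbx B, \Delta$ is handled by applying the restart rule $\wbx_L^2$, taking the copy of $\wbx B$ in the antecedent of the last component as principal: the rule discards that last component and deposits $B$ into the antecedent of $\Gamma_0 \sa \Delta_0, B$, yielding $\mathcal{G}'' \upd \Gamma_0, B \sa \Delta_0, B$, which closes by the induction hypothesis on $B$. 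The key observation is that the additional left premise of $\wbx_R^1$, introduced in the calculus as a counterpart to the restart rules in order to enable cut reduction on boxed formulae, is exactly what is needed to reduce identity on $\wbx B$ to identity on $B$ in this orientation.
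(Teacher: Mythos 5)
Your proof is correct and follows exactly the route the paper indicates (induction on the complexity of $A$, which the paper leaves as a straightforward exercise): all rule applications check out against Fig.~\ref{fig:rules-kb}, including the key step where the left premiss of $\wbx_R^1$ is discharged by the restart rule $\wbx_L^2$ reducing identity on $\wbx B$ to identity on $B$ in the shortened sequent. Nothing to add.
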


In order to introduce cuts in our framework, we need the following
notion.

\begin{definition}
  The \emph{merge} of two linear nested sequents is defined via the
  following, where we assume $\mathcal{G},\mathcal{H}$ to be nonempty:
  \begin{align*}
    (\Gamma \sa \Delta) \mrg (\Sigma \sa \Pi ) &:=  \Gamma, \Sigma
                                                 \sa \Delta, \Pi\\
    (\Gamma \sa \Delta) \mrg (\Sigma \sa \Pi \upd \mathcal{H}) &:=  \Gamma, \Sigma
                                                                 \sa \Delta, \Pi
                                                                 \upd \mathcal{H}\\
    (\Gamma \sa \Delta \upd \mathcal{H})  \mrg (\Sigma \sa \Pi) &:=  \Gamma, \Sigma
                                                                  \sa \Delta, \Pi
                                                                  \upd \mathcal{H}\\
    (\Gamma \sa \Delta \fwd \mathcal{G}) \mrg (\Sigma \sa \Pi \fwd
    \mathcal{H}) &:=  \Gamma, \Sigma \sa \Delta, \Pi \fwd (\mathcal{G}
                   \mrg \mathcal{H})\\
    (\Gamma \sa \Delta \bwd \mathcal{G}) \mrg (\Sigma \sa \Pi \bwd
    \mathcal{H}) &:=  \Gamma, \Sigma \sa \Delta, \Pi \bwd (\mathcal{G}
                   \mrg \mathcal{H})\;.
  \end{align*}
\end{definition}
Hence the merge is only defined for linear nested sequents which are
\emph{structurally equivalent}, i.e., have the same structure of the
nesting operators.

Recall that we write $\mathcal{D} \entails \mathcal{G}$ if
$\mathcal{D}$ is a derivation in $\LNS_\Kt$ of the linear nested
sequent $\mathcal{G}$, and $\entails \mathcal{G}$ if there is a
derivation $\mathcal{D}$ with $\mathcal{D} \entails \mathcal{G}$, and
that we write $\dpth{\mathcal{D}}$ for the depth of the derivation
$\mathcal{D}$. The heavy lifting in the cut elimination proof is done
by the following lemma, which captures the intuition that cuts are
first shifted into the derivation of the left premiss of the cut until
the cut formula becomes principal there. Then they are shifted into
the derivation of the right premiss of the cut until they are
principal here as well and can be reduced to cuts on lower
complexity. The key idea is that because the calculus is end-active,
the cut formula essentially always occurs in the last component of one
of the premisses. As a technical subtlety, in order to shift up cuts
on the principal formula of the rule $\wbx_R^1$ or $\bbx_R^1$ we need
to remember that we can eliminate the occurrence of the cut formula in
the context. This is done by the additional conditions in the
statements $\SR_\wbx(n,m)$ and $\SR_\bbx(n,m)$ of the lemma, where we
use $\SL$ and $\SR$ as mnemonics for ``shift left'' and ``shift
right'', respectively, the latter with subscripts for the cut formula
being modal or propositional:

\begin{lemma}\label{lem:main-lem}
  The following statements hold for every $n,m$:
  \begin{description}
  \item[($\SR_\wbx(n,m)$)] Suppose that all of the following hold:
    \begin{itemize}
    \item $\mathcal{D}_1 \entails \mathcal{G} \fwd \Gamma \sa \Delta, \wbx
      A$ with $\wbx A$ principal in the last rule in $\mathcal{D}_1$
    \item $\mathcal{D}_2 \entails \mathcal{H} \fwd \wbx A, \Sigma \sa \Pi
      \upd \mathcal{I}$
    \item $\dpth{\mathcal{D}_1}+\dpth{\mathcal{D}_2} \leq m$
    \item there is a derivation of $\mathcal{G} \mrg \mathcal{H} \fwd
      \Gamma, \Sigma \sa \Delta, \Pi \fwd \epsilon\sa A$
    \item $\cmp{\wbx A}\leq n$.
    \end{itemize}
    Then there is a
    derivation of $\mathcal{G} \mrg \mathcal{H} \fwd \Gamma, \Sigma \sa
    \Delta, \Pi \upd \mathcal{I}$.
  \item[($\SR_\bbx(n,m)$)] Suppose that all of the following hold:
    \begin{itemize}
    \item $\mathcal{D}_1 \entails \mathcal{G} \bwd \Gamma \sa \Delta, \bbx
      A$ with $\bbx A$ principal in the last rule in $\mathcal{D}_1$
    \item $\mathcal{D}_2 \entails \mathcal{H} \bwd \bbx A, \Sigma \sa \Pi
      \upd \mathcal{I}$
    \item $\dpth{\mathcal{D}_1}+\dpth{\mathcal{D}_2} \leq m$
    \item there is a derivation of $ \mathcal{G} \mrg \mathcal{H} \bwd
      \Gamma, \Sigma \sa \Delta, \Pi \bwd \epsilon\sa A$
    \item $\cmp{\bbx A}\leq n$.
    \end{itemize}
    Then there is a
    derivation of $\mathcal{G} \mrg \mathcal{H} \fwd \Gamma, \Sigma \sa
    \Delta, \Pi \upd \mathcal{I}$.
  \item[($\SR_p(n,m)$)] Suppose that all of the following hold
    where $\mathcal{D}_1\upd \Gamma \sa \Delta, A$ and
      $\mathcal{H} \upd A, \Sigma \sa \Pi$ are structurally equivalent:
    \begin{itemize}
    \item $\mathcal{D}_1 \entails \mathcal{G} \upd \Gamma
      \sa \Delta, A$ with $A$ principal in the last applied rule in
      $\mathcal{D}_1$
    \item $\mathcal{D}_2 \entails \mathcal{H} \upd
      A, \Sigma \sa \Pi \upd \mathcal{I}$
    \item $\dpth{\mathcal{D}_1}+\dpth{\mathcal{D}_2} \leq m$
    \item $\cmp{A} \leq n$
    \item $A$ not of
    the form $\wbx B$ or $\bbx B$.
    \end{itemize}
    Then there is a derivation of $\mathcal{G} \mrg
    \mathcal{H} \upd \Gamma, \Sigma \sa \Delta, \Pi \upd \mathcal{I}$.
  \item[($\SL(n,m)$)] If $\mathcal{D}_1 \entails \mathcal{G} \upd \Gamma \sa
    \Delta, A \upd \mathcal{I}$ and $\mathcal{D}_2 \entails
    \mathcal{H} \upd A, \Sigma \sa \Pi$ with $\cmp{A} \leq n$ and
    $\dpth{\mathcal{D}_1}+ \dpth{\mathcal{D}_2} \leq m$, and
      $\mathcal{G} \upd \Gamma \sa \Delta$ and $\mathcal{H} \upd A,
      \Sigma \sa \Pi$ are structurally equivalent, then there is a derivation
    of $\mathcal{G} \mrg \mathcal{H} \upd \Gamma, \Sigma \sa \Delta,
    \Pi \upd \mathcal{I}$.
  \end{description}
\end{lemma}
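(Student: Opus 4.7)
The plan is to prove all four statements simultaneously by nested induction on the pair $(n, m)$, with the outer induction on $n$ and the inner on $m$. At every step the induction hypothesis gives each of the four statements either for strictly smaller $n$, or for the same $n$ with strictly smaller $m$. Throughout I will freely use admissibility of weakening and contraction together with the derivability of generalised identity, all of which are already established.

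For $\SL(n,m)$ I case-split on the last rule $\rho$ of $\mathcal{D}_1$. If the cut formula $A$ is not principal, I permute the cut past $\rho$: each premiss of $\rho$ still contains $A$ in the required component (end-activeness guarantees this works out), the depth sum drops, and the induction hypothesis $\SL(n, m')$ with $m' < m$ applies premise-wise; the conclusion is rebuilt by reapplying $\rho$. The only delicate sub-case is when $\rho$ is $\wbx_L^2$ or $\bbx_L^2$ and discards the component in which $A$ lives, in which case the premiss of $\rho$ already yields the conclusion via admissible weakening. If $A$ is principal in $\rho$, I hand off to $\SR_p(n,m)$, $\SR_\wbx(n,m)$, or $\SR_\bbx(n,m)$ according to the shape of $A$; in the modal cases the auxiliary derivation required by $\SR_\wbx$ or $\SR_\bbx$ is obtained by weakening the second premiss of $\wbx_R^1$ or $\bbx_R^1$ (or the sole premiss of $\wbx_R^2$ or $\bbx_R^2$) so as to match the merged context. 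For $\SR_p(n,m)$ the argument is routine: permute upward via $\SR_p(n, m{-}1)$ if the cut formula is not principal in the last rule of $\mathcal{D}_2$, otherwise $A$ is of the form $B \to C$ and the standard reduction yields two cuts on the strictly smaller subformulas, discharged by $\SL(n{-}1, \cdot)$.

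The real work is in $\SR_\wbx(n,m)$ (and the entirely symmetric $\SR_\bbx(n,m)$), where I case-split on the last rule of $\mathcal{D}_2$. Rules that do not act on the cut-formula $\wbx A$ permute upward straightforwardly using $\SR_\wbx(n, m{-}1)$, with the auxiliary derivation carried along after weakening to remain structurally matched. When the last rule is $\wbx_L^1$ acting on $\wbx A$, it has propagated $A$ into a successor component; I first apply $\SR_\wbx(n, m{-}1)$ to eliminate the $\wbx A$ occurrence at lower depth, and then use the auxiliary derivation (weakened so its trailing $\epsilon \sa A$ aligns with the successor component) to cut out the newly introduced $A$ via $\SL(n{-}1, \cdot)$, at strictly smaller complexity. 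When the last rule is the restart $\wbx_L^2$ acting on $\wbx A$ — the critical case — the rule shortens the sequent, dropping trailing components and pushing $A$ back into an earlier antecedent. Here I bypass $\mathcal{D}_1$ entirely: I take the premiss of the restart rule and cut it against the auxiliary derivation on $A$, again using $\SL(n{-}1, \cdot)$ at strictly smaller complexity, to conclude.

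The main obstacle is precisely this restart case: without the second premiss of $\wbx_R^1$ and $\bbx_R^1$, no auxiliary derivation would be available to eliminate the $A$ that a restart rule pushes backward, and cut-elimination would break down here — mirroring the known failures of naive sequent formulations of tense logic. Hence the bulk of the careful bookkeeping lies in verifying that the merge operation and the structural-equivalence side-conditions line up correctly at every step, particularly when weakening the auxiliary derivation to match a merged or shortened context after a restart.
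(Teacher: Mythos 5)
Your overall architecture (simultaneous lexicographic induction on $(n,m)$, shifting the cut into the left derivation until the cut formula is principal, then into the right one, with the $\wbx_L^1$/$\bbx_L^1$ case handled by a depth-reduced cut followed by a complexity-reduced cut against the auxiliary derivation) matches the paper. But your treatment of the critical restart case is broken, and it is exactly the case the whole construction is designed around. When the last rule of $\mathcal{D}_2$ is $\wbx_L^2$ with the cut formula principal, its conclusion is $\mathcal{H} \upd \Gamma' \sa \Delta' \bwd \Sigma', \wbx A \sa \Pi'$ and its premiss is $\mathcal{H} \upd \Gamma', A \sa \Delta'$: the $A$ lands in the antecedent of the \emph{already existing} component $\Gamma' \sa \Delta'$. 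The auxiliary derivation you propose to cut against ends in $\mathcal{G} \mrg \mathcal{H} \upd \Gamma,\Gamma' \sa \Delta,\Delta' \bwd \Sigma,\Sigma' \sa \Pi,\Pi' \fwd \epsilon \sa A$, where $A$ sits in a \emph{freshly created} component two nesting positions further along. These two occurrences of $A$ are not at matching positions: the structural-equivalence side condition of $\SL$ fails (the prefix of the auxiliary sequent up to its $A$-component is two components longer than the restart premiss), and no admissible rule identifies the new $\fwd \epsilon \sa A$ component with the old $\Gamma' \sa \Delta'$ component. Semantically they are different worlds, so this is a genuine mismatch, not a presentational one.

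The missing ingredient is the \emph{left} premiss of $\wbx_R^1$, namely $\mathcal{G} \upd \Gamma \sa \Delta, A \bwd \Sigma \sa \Pi, \wbx A$, which places $A$ in the succedent of precisely the component into whose antecedent the restart pushes $A$; this is the stated reason the paper works with the two-premiss rules rather than with $\LNS_\Kt^*$. The paper's reduction first applies $\SL(n,m-1)$ to this left premiss and the conclusion of $\wbx_L^2$ (a cut on $\wbx A$ at smaller depth), obtaining $\mathcal{G}\mrg\mathcal{H} \upd \Gamma,\Gamma' \sa \Delta, A, \Delta' \bwd \Sigma,\Sigma' \sa \Pi,\Pi'$, and only then cuts the resulting $A$ against the restart premiss $\mathcal{H} \upd \Gamma', A \sa \Delta'$ via $\SL(n-1,\cdot)$ --- now the positions line up --- finishing with contraction. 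Your remark that you ``bypass $\mathcal{D}_1$ entirely'' is exactly backwards here. (The combination $\wbx_R^2$ principal versus $\wbx_L^2$ principal cannot arise, as the conclusions are not structurally equivalent, so the left premiss is always available in this situation.) Two smaller points: the auxiliary derivation needed for the hand-off from $\SL$ is obtained by applying $\SL(n,m-1)$ to the right premiss of $\wbx_R^1$ and $\mathcal{D}_2$, not by weakening alone, which would leave a spurious occurrence of the cut formula in the succedent; and since your $\SL(n,m)$ invokes $\SR_\wbx(n,m)$ at the \emph{same} index, you must order the four statements within each fixed $(n,m)$ rather than appeal to the lexicographic induction hypothesis as you state it.
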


The full proof is in the appendix.  As an immediate corollary,
using the statement $\SL(n,m)$ from Lem.~\ref{lem:main-lem} for
suitable $n,m$ we obtain:

\begin{theorem}[Cut elimination]
  Whenever $\entails\mathcal{G} \upd \Gamma \sa \Delta, A$ and
  $\entails \mathcal{H} \upd A, \Sigma \sa \Pi$, then also
  $\entails \mathcal{G} \mrg \mathcal{H} \upd \Gamma, \Sigma \sa
  \Delta, \Pi$.\qed
\end{theorem}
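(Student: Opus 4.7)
The plan is to obtain this theorem as an immediate instance of the statement $\SL(n,m)$ from Lemma~\ref{lem:main-lem}. Given the two derivations $\mathcal{D}_1 \entails \mathcal{G} \upd \Gamma \sa \Delta, A$ and $\mathcal{D}_2 \entails \mathcal{H} \upd A, \Sigma \sa \Pi$, I would simply pick $n := \cmp{A}$ and $m := \dpth{\mathcal{D}_1} + \dpth{\mathcal{D}_2}$. With these choices the size and depth hypotheses of $\SL(n,m)$ are trivially satisfied with equality, and its conclusion is exactly the derivation of $\mathcal{G} \mrg \mathcal{H} \upd \Gamma, \Sigma \sa \Delta, \Pi$ that we need.

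The only prerequisite to verify is the structural-equivalence side-condition in $\SL(n,m)$: that $\mathcal{G} \upd \Gamma \sa \Delta$ and $\mathcal{H} \upd A, \Sigma \sa \Pi$ have the same pattern of $\fwd/\bwd$ nesting operators. This is not an additional hypothesis but is already implicit in the theorem's statement, since the merge $\mathcal{G} \mrg \mathcal{H}$ that appears in the conclusion is, by definition, only defined on structurally equivalent linear nested sequents; consequently we may read the theorem as implicitly presupposing this matching.

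There is no genuine obstacle at this step: all the heavy lifting---the transformation of cuts on boxed formulae into cuts of lower complexity via the auxiliary left premisses of $\wbx_R^1$ and $\bbx_R^1$, the interaction with the ``restart'' rules $\wbx_L^2$ and $\bbx_L^2$, and the double induction on cut-formula complexity and on the sum of premiss depths---has already been packaged into the four statements $\SR_\wbx$, $\SR_\bbx$, $\SR_p$ and $\SL$ of Lemma~\ref{lem:main-lem}. The present theorem is the outermost wrapper that specialises this machinery to the usual formulation of cut admissibility.
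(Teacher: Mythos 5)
Your proposal is correct and is exactly the paper's own argument: the theorem is stated as an immediate corollary of $\SL(n,m)$ from Lemma~\ref{lem:main-lem} with $n := \cmp{A}$ and $m := \dpth{\mathcal{D}_1}+\dpth{\mathcal{D}_2}$, and your observation that the structural-equivalence side-condition is implicitly guaranteed by the well-definedness of the merge $\mathcal{G}\mrg\mathcal{H}$ in the conclusion is the right way to discharge it.
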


As usual, we will use cut elimination to show completeness.  However,
we also need to show admissibility of the \emph{necessitation rules}
$A / \wbx A$ and $A / \bbx A$. While this is straightforward in
standard calculi for modal logics, due to the additional premiss in
the rules $\wbx_R^1$ and $\bbx_R^1$, here we need to do some work:

\begin{theorem}[Admissibility of necessitation]\label{lem:nec}
  If $\epsilon\sa A$ is derivable in $\LNS_\Kt$, then so are
  $\epsilon\sa \wbx A$ and $\epsilon\sa \bbx A$.
\end{theorem}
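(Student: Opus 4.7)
The plan is to prove the $\wbx$-case; the $\bbx$-case follows by the obvious symmetry exchanging $\fwd\leftrightarrow\bwd$ and swapping the two pairs of modalities. Given a derivation of $\epsilon\sa A$, applying $\wbx_R^2$ at the root reduces the problem to constructing a derivation of $\epsilon\sa\wbx A\fwd\epsilon\sa A$. The main work is then to ``lift'' the existing derivation to produce this longer sequent.

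I would establish a lifting lemma by induction on the depth of the derivation: for every derivation of a linear nested sequent $\mathcal{S}$ and every multiset $\Theta$ of formulae, $\epsilon\sa\Theta\fwd\mathcal{S}$ is derivable. Instantiating with $\Theta=\{\wbx A\}$ and $\mathcal{S}=\epsilon\sa A$ yields the target. In almost every case the lifting is routine: one replays the original rule underneath the prefix $\epsilon\sa\Theta\fwd$, which only changes the rule's context $\mathcal{G}$ into $\epsilon\sa\Theta\fwd\mathcal{G}$---still a legal context. This handles identity, $\bot_L$, the propositional rules, $\EW$ and weakening, the propagation rules $\wbx_L^1,\bbx_L^1$, the restart rules $\wbx_L^2,\bbx_L^2$, the modal right rules $\wbx_R^1,\wbx_R^2,\bbx_R^1$, and also $\bbx_R^2$ whenever its original context is nonempty.

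The awkward case, which is where the work flagged by the authors lies, is $\bbx_R^2$ applied with empty context, i.e.\ to a single-component sequent $\Gamma\sa\Delta,\bbx B$: prefixing replaces the $\bwd$-connective required by $\bbx_R^2$ with $\fwd$, so one must use $\bbx_R^1$ in its place. The right premise $\epsilon\sa\Theta\fwd\Gamma\sa\Delta,\bbx B\bwd\epsilon\sa B$ of $\bbx_R^1$ comes from the IH applied to the premise of the original $\bbx_R^2$. The left premise $\epsilon\sa\Theta,B\fwd\Gamma\sa\Delta,\bbx B$ is the crux: end-activeness prevents any rule from decomposing the $B$ in the succedent of the prefix, and a further application of $\bbx_R^1$ to this left premise only reproduces it up to contraction.

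Supplying this left premise is the main obstacle. My plan is to exploit the admissibility of cut established in the preceding section: the IH applied to the premise of the original $\bbx_R^2$ with the enlarged prefix $\Theta\cup\{B\}$ yields the 3-component sequent $\epsilon\sa\Theta,B\fwd\Gamma\sa\Delta,\bbx B\bwd\epsilon\sa B$, which is semantically equivalent to the desired 2-component left premise (the trailing $\bwd\epsilon\sa B$ only duplicates the $\bbx B$ already in the second component's succedent). Cutting this sequent against an auxiliary derivation built using Lem.~\ref{lem:generalised-init} on $\bbx B$---essentially a lifted instance of $\bbx B\sa\bbx B$ that provides the antecedent occurrence needed to erase the redundant trailing component---should produce the missing left premise, whereupon the induction closes and the theorem follows by the initial reduction.
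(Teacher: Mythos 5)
Your overall decomposition matches the paper's: apply $\wbx_R^2$ at the root, lift the given derivation of $\epsilon \sa A$ by prefixing $\epsilon \sa \wbx A \fwd$ to every sequent, and observe that the only step that fails to commute with the prefix is a \emph{critical} application of $\bbx_R^2$, i.e.\ one whose conclusion is a single component, which must be replaced by $\bbx_R^1$ and therefore demands the additional left premiss $\epsilon \sa \Theta, B \fwd \Gamma \sa \Delta, \bbx B$. Up to this point you have isolated exactly the obstacle the authors flag, and your observation that re-applying $\bbx_R^1$ to that left premiss is circular is correct.

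The proposed resolution via cut, however, does not go through. The cut admissible in this calculus (statement $\SL(n,m)$ of Lem.~\ref{lem:main-lem} and the cut elimination theorem) yields a conclusion of the form $\mathcal{G} \mrg \mathcal{H} \upd \Gamma, \Sigma \sa \Delta, \Pi \upd \mathcal{I}$: it removes one occurrence of the cut formula but always \emph{preserves the component structure} of the left premiss, including the tail $\mathcal{I}$ above the component containing the cut formula. Cutting $\epsilon \sa \Theta, B \fwd \Gamma \sa \Delta, \bbx B \bwd \epsilon \sa B$ on $\bbx B$ against a generalised initial sequent of the form $\mathcal{H} \fwd \bbx B, \Sigma' \sa \bbx B, \Pi'$ therefore returns, after contraction, exactly the three-component sequent you started from; the trailing $\bwd \epsilon \sa B$ is never erased (cutting on the $B$ in the last component fares no better). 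The ``semantic equivalence'' you invoke is precisely the content of an application of $\bbx_R^2$ in a context ending in $\fwd$, which is not an instance of that rule---that mismatch \emph{is} the whole difficulty, and no combination of cut, weakening and the generalised initial sequents supplies the missing structural contraction. The paper instead discharges the assumption $\Gamma \sa \Delta, B$ by a global transformation of the subderivation above the critical application: its second component is ``folded back'' into the prefixed first component; whenever the original derivation touches its first component via $\bbx_L^1$, or shortens to one component via $\wbx_L^2$ or $\EW$, the original first component is re-created from the derivation below the assumption and the step is simulated by $\bbx_L^2$ resp.\ $\wbx_L^1$; and since these re-creations duplicate assumptions, termination requires the multiset ordering on indices. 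That machinery is the actual content of the theorem and is absent from your proposal.
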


\begin{proof}
  We consider the proof for $\epsilon\stt \wbx A$, the other case is
  analogous. To refer to problematic applications of the $\bbx_R^2$
  rule, we introduce some terminology.

  \begin{definition}
    Let $\mathcal{D}$ be the derivation of the sequent $\stt A$. An
    application $r$ of the rule $\bbx_R^2$ is \emph{critical} in
    $\mathcal{D}$ if its conclusion has exactly one component. The
    \emph{depth} of a critical application $r$ of $\bbx_R^2$ is the
    depth of the sub-derivation of $\mathcal{D}$ ending with this rule
    application, written $\dpth{r}$.
  \end{definition}

  Let $\mathcal{D}$ be a derivation of $\stt A$, and let
  $\crit{\mathcal{D}}$ be the set of critical applications of
  $\bbx_R^2$ in $\mathcal{D}$. For every possible depth $d$ of a
  critical application in $\crit{\mathcal{D}}$, fix an enumeration of
  all critical applications in $\crit{\mathcal{D}}$ with this
  depth. We then convert the derivation $\mathcal{D}$ bottom-up into a
  \emph{derivation from assumptions} of $\stt \wbx A$, i.e., a
  derivation of $\stt\wbx A$ where the leaves might be labelled with
  arbitrary linear nested sequents called the \emph{assumptions}. Each
  of these comes from one of the critical applications of
  $\mathcal{D}$, i.e., we have an injection $\iota$ from
  $\crit{\mathcal{D}}$ to the set of assumptions of the so far
  constructed derivation with assumptions. To each assumption $A$ we
  associate an \emph{index}, i.e., a triple $(d,i,c)$ of natural
  numbers, where $d$ is the depth of the critical application
  $\iota^{-1}(A)$, the number $i$ is the index of $\iota^{-1}(A)$ in
  the enumeration of critical applications of depth $d$, and
  $c \leq \dpth{\mathcal{D}}$ is a number corresponding to the depth
  of the current position in the original derivation $\mathcal{D}$. To
  ensure termination of the procedure, we consider the lexicographic
  ordering $<_{lex}$ on the indices $(d,i,c)$, and the \emph{multiset
    ordering} $\prec$ induced by $<_{lex}$ on the set of multisets of
  indices~\cite{Dershowitz:1979Multisets}. In particular for two such
  multisets $\mathcal{A}, \mathcal{B}$ we have that
  $\mathcal{A} \prec \mathcal{B}$ iff $\mathcal{B}$ can be obtained
  from $\mathcal{A}$ by replacing one or more indices $(d,i,c)$ by a
  finite number of indices $(d',i',c')$ with
  $(d',i',c') <_{lex} (d,i,c)$. It is shown in \emph{op.cit.}\ that
  $\prec$ is well-founded.

  The first ingredient in the construction of the derivation of
  $\stt \wbx A$ is given by essentially prefixing
  $\epsilon\stt \wbx A$ to every linear nested sequent in
  $\mathcal{D}$:

  \begin{definition}
    Let $\mathcal{E}$ be a sub-derivation of $\mathcal{D}$ and
    $\Gamma \stt \Delta$ a sequent. For any natural number $n$ the
    derivation $(\Gamma \stt \Delta) \fwd \mathcal{E}(n)$ is obtained
    by prefixing $\Gamma \stt \Delta \fwd$ to every linear nested
    sequent in $\mathcal{E}$, and replacing critical applications of
    $\bbx_R^2$ with applications of $\bbx_R^1$ and an assumption as
    follows:
    \[
      \begin{array}{l}
        \infer[\bbx_R^2]{\Sigma \stt \Pi, \bbx B
        }
        {\Sigma \stt \Pi, \bbx B \bwd \epsilon\stt B
        }\smallskip\\
        \leadsto\quad
        \vcenter{
        \infer[\bbx_R^1]{\Gamma\stt\Delta \fwd \Sigma \stt \Pi, \bbx B
        }
        {\infer[\EW]{\Gamma\stt\Delta, B \fwd \Sigma \stt \Pi, \bbx B
        }
        {\Gamma\stt \Delta, B
        }
        &
          \Gamma\stt \Delta \fwd \Sigma \stt \Pi, \bbx B \bwd \epsilon
          \stt B
          }
          }
      \end{array}
    \]
    The index $(d,i,n)$ of the assumption $\Gamma \stt \Delta,B$ is
    given by the depth $d$ of the original critical application of
    $\bbx_R^2$, its index $i$, and the number $n$.
  \end{definition}

  In the first step we obtain from $\mathcal{D}$ the derivation with
  assumptions
  $(\epsilon\stt \wbx A) \fwd \mathcal{D}(\dpth{\mathcal{D}})$. The
  conclusion of this derivation is
  $\epsilon\stt \wbx A \fwd \epsilon\stt A$, hence applying $\wbx_R^2$
  we will ultimately obtain a derivation with assumptions of
  $\epsilon\stt \wbx A$.
  
  The next step is to construct a derivation for each assumption,
  starting with one of maximal index. The general idea is to copy the
  derivation of the premiss of the corresponding critical application
  of $\bbx_R^2$, but essentially ``folding back'' the second component
  of the original derivation into the first one of the new derivation
  until the linear nested sequents in the original derivation are
  reduced to one component again. This means that the first component
  of the new derivation will collect a number of second components
  occurring in the original derivation. To make this precise, for a
  sequent $\Omega \stt \Theta$, a derivation $\mathcal{E}$ with
  assumptions, a critical rule application $r$ and a natural number
  $n$, we write
  $(\Omega \stt \Theta) \mrg \mathcal{E}(r \leftarrow n)$ for the
  derivation with assumptions obtained from $\mathcal{E}$ by merging
  the first component of each linear nested sequent in $\mathcal{E}$
  with the sequent $\Omega \stt \Theta$, and changing the indices
  $(d,i,c)$ of all those assumption in $\mathcal{E}$ corresponding to
  $r$ to $(d,i,n)$.

  Take an assumption $\Gamma \stt \Delta, B$ with index $(d,i,c)$
  which is maximal w.r.t.\ $<_{lex}$, and suppose that the
  corresponding critical rule application $r$ is given by:
  \[
    \infer[\bbx_R^2]{\Sigma \stt \Pi, \bbx B
    }
    {\infer*[\mathcal{E}]{\Sigma \stt \Pi, \bbx B \bwd \epsilon\stt B}{}
    }
  \]
  Suppose that the assumption occurs in the context
  \[
    \infer*[\mathcal{F}]{
    }
    {\infer[\bbx_R^1]{\Gamma \stt \Delta \fwd \Sigma \stt \Pi, \bbx B
      }
      {\infer[\EW]{\Gamma \stt \Delta, B \fwd \Sigma \stt \Pi, \bbx B
        }
        {\Gamma \stt \Delta, B
        }
        &
        \infer*[]{\Gamma \stt \Delta \fwd \Sigma \stt \Pi, \bbx B \fwd
          \epsilon\stt B
        }
        {
        }
      }
    }
  \]
  where $\mathcal{F}$ is the derivation with assumptions below the
  conclusion of the application of $\bbx_R^1$.  Note that all
  assumptions in $\mathcal{F}$ have index smaller than $(d,i,c)$.  We
  extend this derivation upwards by applying the same rules as in the
  original derivation $\mathcal{E}$, until in $\mathcal{E}$ we
  encounter a rule $\wbx_L^2$ or $\EW$ which shortens the sequent to
  only the first component again. This is straightforward unless in
  the original derivation we have an application of a rule in which
  the first component is active, i.e., an application of the rules
  $\bbx_L^1$ or $\wbx_R^1$ with active first component.

  The case of $\wbx_R^1$ is unproblematic, replacing $\wbx_R^1$ with
  $\wbx_R^2$ and continuing upwards as in the derivation of the right
  premiss.  Note that the first component in the original derivation
  stays the same.
  
  In the case of an application of $\bbx_L^1$ we recreate the original
  first component $\Sigma \stt \Pi, \bbx B$ using $\mathcal{F}$. In
  general, this creates new copies of the assumptions in
  $\mathcal{F}$, in particular of other assumptions corresponding to
  $r$.  To ensure termination we decrease the index of every
  assumption corresponding to $r$ to the depth of the current position
  in the original derivation. Hence the multiset of assumptions of the
  new derivation is smaller than that of the old one w.r.t.\ $\prec$.
  Suppose that we encounter an application of the rule $\bbx_L^1$ in
  the form
  \[
    \infer[\bbx_L^1]{\Sigma', \bbx C \stt \Pi', \bbx B \bwd \Xi
      \stt \Upsilon
    }
    {\infer*[\mathcal{G}]{\Sigma', \bbx C \stt \Pi', \bbx B,
         \bwd \Xi, C
      \stt \Upsilon
      }
      {
      }
    }
  \]
  Since all linear nested sequents between the conclusion of this rule
  application and the critical rule application $r$ contain at least
  two components, and since when simulating applications of $\wbx_R^1$
  as above we never changed the first component, the first component
  $\Sigma', \bbx C \stt \Pi', \bbx B$ stays the same as the original
  first component $\Sigma \stt \Pi, \bbx B$. Hence we can recreate
  this component and continue as:
  \[
    \infer*[(\Xi \stt \Upsilon)\oplus \mathcal{F} (r \leftarrow
    \dpth{\mathcal{G}})]{\Gamma,\Xi \stt \Delta,B,\Upsilon
    }
    {\infer[\bbx_L^2]{\Gamma,\Xi \stt \Delta,B,\Upsilon \fwd \Sigma',
        \bbx C \stt \Pi',
        \bbx B
      }
      {\Gamma, \Xi, C \stt \Delta, B, \Upsilon
      }
    }
  \]
  Continuing upwards like this, in the original derivation we
  eventually reach initial sequents, or applications of $\wbx_L^2$ or
  $\EW$ which reduce the number of components to one. In the latter
  case, we again recreate the original first component. E.g., suppose
  that in the original derivation we have an application of $\wbx_L^2$
  in the form
  \[
    \infer[\wbx_L^2]{\Sigma' \stt \Pi' \bwd \Xi, \wbx C \stt \Upsilon
    }
    {\infer*[\mathcal{G}]{\Sigma', C \stt \Pi'
      }
      {
      }
    }
  \]
  Then again we have that $\Sigma' \stt \Pi'$ is the same as the first
  component $\Sigma \stt \Pi, \bbx B$ of the critical rule application
  $r$, and hence we can recreate it and continue using
  \[
    \infer*[(\Xi \stt \Upsilon) \oplus \mathcal{F}(r \leftarrow
    \dpth{\mathcal{G}})]{\Gamma, \Xi, \wbx C \stt \Delta, \Upsilon
    }
    {\infer[\wbx_L^1]{\Gamma, \Xi, \wbx C \stt \Delta, \Upsilon \fwd
        \Sigma' \stt \Pi' 
      }
      {\infer*[(\Gamma,\Xi,\wbx C \stt \Delta, \Upsilon) \fwd
        \mathcal{G}(\dpth{\mathcal{G}})]{\Gamma, \Xi, \wbx C \stt
          \Delta, \Upsilon \fwd
        \Sigma', C \stt \Pi' 
        }
        {
        }
      }
    }
  \]
  Note that again the multiset of indices of assumptions is decreased
  wrt.~$\prec$. In particular, the depth of every critical rule
  application in $\mathcal{G}$ is smaller than the depth of the
  critical rule application $r$. The case for the rule $\EW$ is
  analogous.

  Continuing in this way we replace every assumption by a finite
  multiset of smaller ones. Hence the sequence of multisets of
  assumptions is strictly decreasing wrt.\ the well-ordering $\prec$,
  and the procedure must terminate. When it does we obtain a
  derivation without assumptions, giving a derivation of
  $\epsilon\stt \wbx A$. \qed
\end{proof}

\begin{theorem}[Completeness]
  The system $\LNS_\Kt$ is cut-free complete for $\Kt$.
\end{theorem}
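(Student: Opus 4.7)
The plan is to establish completeness by simulating the Hilbert system $\HKt$ inside $\LNS_\Kt$ and then invoking cut elimination. Since $\HKt$ is sound and complete for $\Kt$ with respect to Kripke semantics, every valid formula $\varphi$ has an $\HKt$-derivation, which we translate step by step into an $\LNS_\Kt$-derivation of $\seq{\epsilon}{\varphi}$ possibly using cut; the cut elimination theorem then eliminates the cuts.

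The core task is to derive the end-sequent $\seq{\epsilon}{A}$ for every $\HKt$-axiom $A$ and to simulate the Hilbert-level rules. Propositional tautologies follow by standard saturation using Lemma~\ref{lem:generalised-init} and the invertibility of the propositional rules. The distribution axioms, e.g.~$\wbx(A \to B) \to (\wbx A \to \wbx B)$, are derived by unpacking two outer implications with $(\to_R)$ to reach $\seq{\wbx(A\to B),\wbx A}{\wbx B}$, introducing a fresh $\fwd$- or $\bwd$-component via $\wbx_R^2$ or $\bbx_R^2$, propagating $A \to B$ and $A$ into that component with $\wbx_L^1$ or $\bbx_L^1$, and closing by $(\to_L)$ and id. Modus ponens is simulated by two instances of cut, which are admissible by the cut elimination theorem, and the two necessitation rules are exactly the content of Theorem~\ref{lem:nec}.

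The main obstacle I anticipate is the derivation of the interaction axiom $\wdi \bbx p \to p$, which is precisely what the restart rules were introduced for. Unfolding $\wdi$ to $\neg \wbx \neg$ and performing the propositional reductions via $(\to_R)$, $(\to_L)$, and $(\bot_L)$, then applying $\wbx_R^2$ and a further $(\to_R)$, one reaches the configuration $\seq{\epsilon}{p,\wbx(\bbx p \to \bot)} \fwd \seq{\bbx p}{\bot}$. At this point the restart rule $\bbx_L^2$ inserts $p$ into the antecedent of the first component and yields an instance of id. The axiom $\bdi \wbx p \to p$ is handled symmetrically via $\wbx_L^2$ underneath a $\bwd$-component. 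With all axioms derivable and modus ponens and necessitation admissible, the Hilbert-level completeness argument combined with cut elimination then delivers a cut-free $\LNS_\Kt$-derivation of $\seq{\epsilon}{\varphi}$ for every valid $\varphi$.
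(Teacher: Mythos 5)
Your proposal is correct and follows essentially the same route as the paper: simulate $\HKt$ by deriving its axioms, use cut (then cut elimination, Lemma~\ref{lem:main-lem}) for modus ponens, and invoke Theorem~\ref{lem:nec} for necessitation. The only difference is that you spell out the axiom derivations—in particular the use of the restart rules $\wbx_L^2$ and $\bbx_L^2$ for the interaction axioms—which the paper dismisses as straightforward.
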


\begin{proof}
  It is straightforward to derive the axioms.  Modus ponens is
  simulated as usual using cuts. The necessitation rules are simulated
  using Lem.~\ref{lem:nec}.\qed
\end{proof}

\section{Application: Linear nested sequents for modal logic $\KB$}

It is rather straightforward to adapt our system to capture modal
logic $\KB$. Semantically, $\KB$ is given as the mono-modal logic of
\emph{symmetric Kripke frames}, i.e., frames with symmetric
accessibility relation.  Syntactically, $\KB$ is obtained from $\Kt$
by collapsing the forwards and backwards modalities, e.g., via adding
the axiom $\wbx A \leftrightarrow \bbx A$. Correspondingly, we also
collapse the structural connectives $\fwd$ and $\bwd$ to obtain the
simpler definition of linear nested sequents for $\KB$ via the grammar
$S := \Gamma \stt \Delta \mid \Gamma \stt \Delta \fwd S$. The simplest
version of the linear nested sequent calculus $\LNS_\KB$ for modal
logic $\KB$ then contains the propositional rules and rule $\EW$ of
Fig.~\ref{fig:rules-kb} together with the two standard rules
\[
  \infer[\wbx_R]{\mathcal{G} \fwd \Gamma \stt \Delta, \wbx A
  }
  {\mathcal{G} \fwd \Gamma \stt \Delta, \wbx A \fwd \epsilon \stt A
  }
  \qquad
  \infer[\wbx_L^1]{\mathcal{G} \fwd \Gamma, \wbx A \stt \Delta \fwd
    \Sigma \stt \Pi
  }
  {\mathcal{G} \fwd \Gamma, \wbx A \stt \Delta \fwd
    \Sigma, A \stt \Pi
  }
\]
found in (linear) nested sequent calculi for modal logic $\K$ and the
single new rule
\[
  \infer[\wbx_L^2]{\mathcal{G} \fwd \Gamma \stt \Delta \fwd \Sigma, \wbx A
    \stt \Pi
  }
  {\mathcal{G} \fwd \Gamma,A \stt \Delta
  }
\]
Soundness is seen analogously to Thm.~\ref{thm:soundness}, and
completeness follows by repeating the proofs for $\Kt$, at each step
collapsing the forwards and backwards modalities:

\begin{theorem}
  The calculus $\LNS_\KB$ is sound and complete for modal logic
  $\KB$.\qed
\end{theorem}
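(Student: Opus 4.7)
The plan is to transplant the soundness and completeness proofs from Sections~3--5 to the mono-modal symmetric setting by systematically identifying the structural connective $\bwd$ with $\fwd$ and the modal connectives $\bbx,\bdi$ with $\wbx,\wdi$. The formula translation becomes
\[
  \tau(\Gamma \stt \Delta \fwd \mathcal{G}) = \hat{\Gamma} \to (\breve{\Delta} \lor \wbx\,\tau(\mathcal{G})),
\]
evaluated over symmetric Kripke frames.

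For soundness I would repeat the rule-by-rule argument of Thm.~\ref{thm:soundness}. The only genuinely new case is the restart rule $\wbx_L^2$. Suppose $\mathcal{G} \fwd \Gamma \stt \Delta \fwd \Sigma, \wbx A \stt \Pi$ is falsified in a symmetric model at some world $w_0$; unwinding the translation yields an $R$-chain culminating in worlds $w,v$ with $wRv$, $w \forces \hat{\Gamma}$, $w \not\forces \breve{\Delta}$, $v \forces \hat{\Sigma} \land \wbx A$ and $v \not\forces \breve{\Pi}$. By symmetry $vRw$, hence $w \forces A$, and the premiss $\mathcal{G} \fwd \Gamma, A \stt \Delta$ is falsified at $w_0$. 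The remaining rules are immediate adaptations of the $\Kt$-cases, since the separate forward and backward witness chains collapse into a single $R$-chain.

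For completeness I would favour re-running the cut-elimination route of Section~\ref{sec:cut-elimination}, as it produces the cleanest recast. One first introduces the two-premiss variant of the box-right rule, the $\KB$-analog of both $\wbx_R^1$ and $\bbx_R^1$, and verifies interderivability with $\LNS_\KB$ as in the paper. Lem.~\ref{lem:main-lem} then collapses to the statements $\SR_\wbx(n,m)$, $\SR_p(n,m)$ and $\SL(n,m)$, whose inductive proofs copy those for $\Kt$ with all directional distinctions erased. Admissibility of the single necessitation rule $A/\wbx A$ adapts Thm.~\ref{lem:nec}: critical applications of the one-component $\wbx_R^2$ variant are enumerated, every linear nested sequent is prefixed by $\epsilon \stt \wbx A \fwd$, and at the recovery points (applications of $\wbx_L^2$ or $\EW$ shortening back to a single component) we reconstruct the original first component using the subderivation below the prefixed rule and recurse; termination is guaranteed by the same well-founded multiset ordering on indices. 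Completeness of $\LNS_\KB$ then follows from derivability of the $\KB$-axiom together with cut and necessitation.

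The main obstacle will be admissibility of necessitation. In the $\Kt$-case the root rule $\wbx_R^2$ and the critical applications of $\bbx_R^2$ lie in different directions, so they are never confused. In $\KB$ both are of the same flavour, and one must check that the first-component-preservation argument used to recover the original first component before applying $\wbx_L^2$ still goes through. Inspection shows that the only rules modifying the first component of an intermediate two-component sequent are precisely those already handled explicitly in the fold-back construction, so the prefix $\epsilon \stt \wbx A$ carrying the outer necessitation is preserved until the associated assumption is discharged. Once this bookkeeping is confirmed, the proof closes verbatim.
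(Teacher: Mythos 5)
Your proposal matches the paper's own (one-sentence) proof, which simply says that soundness is seen analogously to the soundness theorem for $\Kt$ and that completeness follows by repeating the $\Kt$ proofs while collapsing the forwards and backwards modalities and structural connectives. You flesh out exactly this route --- including the symmetry argument for $\wbx_L^2$ and the check that the fold-back construction for admissibility of necessitation survives the collapse --- so the proposal is correct and takes essentially the same approach, just in more detail than the paper records.
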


In comparison with the linear nested sequent calculus for modal logic
$\KB$ introduced by Parisi~\cite{Parisi:2017PhD}, we do not need to
change the direction of the linear nested sequents, and (a variant of)
our system has syntactic cut elimination. Note also that the system
$\LNS_\KB$ is essentially the end-active and linear version of the
nested sequent calculus for $\KB$ of Br\"unnler and
Poggiolesi~\cite{Brunnler:2009kx,Poggiolesi:2010} with the crucial
difference that the last component is deleted in the premiss of the
symmetry rule $\wbx_L^2$. Since derivations of $\LNS_\KB$ can be
transformed straightforwardly bottom-up into derivations in the full
nested sequent system considered in \emph{op.~cit.}, our completeness
result implies the completeness results there.

\section{Conclusion}

We have seen that linear nested sequents are so far the minimal
extension of traditional sequents needed to handle tense logics and
modal logic $\KB$. Intuitively, they provide the semantic expressive
power to look both ways along the underlying Kripke reachability
relation while also providing a rigorous and modular proof-theoretic
framework. The main novelty to mimic traditional tableau calculi for
tense logics is the addition of restart rules to maintain the
compatibility between parent nodes and their children.

In future work we would like to explore the possibility of extending
our calculus to capture further properties of the accessibility
relation such as reflexivity, forwards or backwards directedness, or
transitivity. We conjecture that suitable modifications of the rules
$\wbx_R^1$ and $\bbx_R^1$ in the spirit of the ones presented here
should suffice for a cut elimination proof. It is perhaps less obvious
that the proof of admissibility of necessitation goes through in these
cases as well. Finally, we would like to investigate whether it is
possible to use our calculi in complexity-optimal decision procedures.

\newpage

\bibliographystyle{splncs04}
\bibliography{linearnestedtenselogics}

\newpage
\appendix

\section{Additional Proofs}
\label{sec:appendix}

\setcounter{theorem}{15}
\begin{lemma}
  The following statements hold for every $n,m$:
  \begin{enumerate}
  \item[$(\SR_\wbx(n,m))$] Suppose that all of the following hold:
    \begin{itemize}
    \item $\mathcal{D}_1 \entails \mathcal{G} \fwd \Gamma \sa \Delta, \wbx
      A$ with $\wbx A$ principal in the last rule in $\mathcal{D}_1$
    \item $\mathcal{D}_2 \entails \mathcal{H} \fwd \wbx A, \Sigma \sa \Pi
      \upd \mathcal{I}$
    \item $\dpth{\mathcal{D}_1}+\dpth{\mathcal{D}_2} \leq m$
    \item there is a derivation of $\mathcal{G} \mrg \mathcal{H} \fwd
      \Gamma, \Sigma \sa \Delta, \Pi \fwd \epsilon\sa A$
    \item $\cmp{\wbx A}\leq n$.
    \end{itemize}
    Then there is a
    derivation of $\mathcal{G} \mrg \mathcal{H} \fwd \Gamma, \Sigma \sa
    \Delta, \Pi \upd \mathcal{I}$.
  \item[$(\SR_\bbx(n,m))$] Suppose that all of the following hold:
    \begin{itemize}
    \item $\mathcal{D}_1 \entails \mathcal{G} \bwd \Gamma \sa \Delta, \bbx
      A$ with $\bbx A$ principal in the last rule in $\mathcal{D}_1$
    \item $\mathcal{D}_2 \entails \mathcal{H} \bwd \bbx A, \Sigma \sa \Pi
      \upd \mathcal{I}$
    \item $\dpth{\mathcal{D}_1}+\dpth{\mathcal{D}_2} \leq m$
    \item there is a derivation of $ \mathcal{G} \mrg \mathcal{H} \bwd
      \Gamma, \Sigma \sa \Delta, \Pi \bwd \epsilon\sa A$
    \item $\cmp{\bbx A}\leq n$.
    \end{itemize}
    Then there is a
    derivation of $\mathcal{G} \mrg \mathcal{H} \fwd \Gamma, \Sigma \sa
    \Delta, \Pi \upd \mathcal{I}$.
  \item[$(\SR_p(n,m))$] Suppose that all of the following hold:
    \begin{itemize}
    \item $\mathcal{D}_1 \entails \mathcal{G} \upd \Gamma
      \sa \Delta, A$ with $A$ principal in the last applied rule in
      $\mathcal{D}_1$
    \item $\mathcal{D}_2 \entails \mathcal{H} \upd
      A, \Sigma \sa \Pi \upd \mathcal{I}$
    \item $\dpth{\mathcal{D}_1}+\dpth{\mathcal{D}_2} \leq m$
    \item $\cmp{A} \leq n$
    \item $A$ not of
    the form $\wbx B$ or $\bbx B$.
    \end{itemize}
    Then there is a derivation of $\mathcal{G} \mrg
    \mathcal{H} \upd \Gamma, \Sigma \sa \Delta, \Pi \upd \mathcal{I}$.
  \item[$(\SL(n,m))$] If
    $\mathcal{D}_1 \entails \mathcal{G} \upd \Gamma \sa \Delta, A \upd
    \mathcal{I}$ and
    $\mathcal{D}_2 \entails \mathcal{H} \upd A, \Sigma \sa \Pi$ with
    $\cmp{A} \leq n$ and
    $\dpth{\mathcal{D}_1}+ \dpth{\mathcal{D}_2} \leq m$, then there is
    a derivation of
    $\mathcal{G} \mrg \mathcal{H} \upd \Gamma, \Sigma \sa \Delta, \Pi
    \upd \mathcal{I}$.
  \end{enumerate}
\end{lemma}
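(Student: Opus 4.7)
The plan is to prove all four statements simultaneously by induction on the lexicographic pair $(n,m)$, with $n$ (the bound on cut-formula complexity) primary and $m$ (the bound on summed derivation depths) secondary. At a fixed $(n,m)$ the four statements form an acyclic call graph: $\SL$ dispatches into $\SR_p$, $\SR_\wbx$, $\SR_\bbx$ at the same $(n,m)$, while each $\SR_\star$ only calls $\SL$ at strictly smaller $n'$, or calls itself at strictly smaller $m'$. For $\SL(n,m)$ I case-split on the last rule $\rho$ of $\mathcal{D}_1$: non-principal cases permute the cut upward by cutting each premise of $\rho$ against $\mathcal{D}_2$ via $\SL(n,m')$ with $m' < m$ and then re-applying $\rho$. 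Principal cases dispatch on the shape of the cut formula $A$: propositional $A$ to $\SR_p(n,m)$; $A = \wbx B$ principal under $\wbx_R^2$ to $\SR_\wbx(n,m)$, with the required extra-hypothesis derivation obtained from the premise of $\wbx_R^2$ by first eliminating its residual occurrence of $\wbx B$ via $\SL(n,m-1)$ against $\mathcal{D}_2$ and then weakening to merge with $\mathcal{H}$ using admissibility of $\W$; symmetrically for $A = \bbx B$ with $\bbx_R^2$ via $\SR_\bbx(n,m)$. The remaining principal cases, where the cut formula is $\wbx B$ principal under $\wbx_R^1$ (or $\bbx B$ under $\bbx_R^1$), are handled inside $\SL$ itself by a further case analysis on the last rule of $\mathcal{D}_2$; the crucial sub-case is the principal-principal pairing with $\wbx_L^2$ (respectively $\bbx_L^2$), which reduces to a cut on $B$ of complexity $< n$ closed by the outer IH.

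For $\SR_p(n,m)$ I case on the last rule $\sigma$ of $\mathcal{D}_2$: inactive cases permute up via $\SR_p(n,m')$, and active cases form a standard propositional principal-principal pair reducible to cuts on strict subformulae of $A$, each closed by $\SL$ at outer index. For $\SR_\wbx(n,m)$ I again case on $\sigma$; inactive cases permute up via $\SR_\wbx(n,m')$, for which the extra hypothesis must be re-weakened to the new structural shape using $\W$. The critical active case is $\sigma = \wbx_L^1$ with $\wbx A$ principal: its premise injects $A$ into the next component, and the extra hypothesis supplies $A$ in exactly that position; composing them via a cut on $A$ of complexity $\cmp{A} < n$, closed by $\SL$ at the outer index, produces the desired sequent. $\SR_\bbx(n,m)$ is handled dually, interchanging $\fwd$ with $\bwd$.

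The main obstacle is coordinating the extra hypothesis of $\SR_\wbx$ and $\SR_\bbx$ with the rest of the induction. The restart rules $\wbx_L^2$ and $\bbx_L^2$ delete the last component upon firing, so a naive shift of the cut past a restart loses the contextual information needed to close it. The left premise of $\wbx_R^1$ and $\bbx_R^1$, stipulated precisely for this purpose, provides the context-reduced derivation that populates the extra hypothesis; maintaining this invariant through the upward permutations inside $\SR_\wbx$ and $\SR_\bbx$ requires re-weakening it at each step, and the end-active character of the calculus is essential for keeping these weakenings local to the last component. Once this invariant is correctly maintained, the interaction between the two-premise right rules on the left side of the cut and the restart rules on the right side collapses cleanly into cuts on the strict subformula $A$, which lies in the scope of the outer induction.
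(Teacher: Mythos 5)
Your overall architecture matches the paper's: a simultaneous induction on the lexicographic pair $(n,m)$, with $\SL$ dispatching to the $\SR$ statements at the same index and those recursing only at smaller indices, and with the extra hypothesis of $\SR_\wbx$/$\SR_\bbx$ re-weakened through the contextual permutations and consumed in the $\wbx_L^1$/$\bbx_L^1$ case exactly as you describe. The organizational deviation of handling a $\mathcal{D}_1$ ending in $\wbx_R^1$ by an inlined case analysis inside $\SL$, rather than routing both $\wbx_R^1$ and $\wbx_R^2$ through $\SR_\wbx(n,m)$ as the paper does, is workable but buys you nothing: you must then re-establish the auxiliary derivation at every recursive step instead of carrying it as a hypothesis. (Also, your claimed call graph is slightly off: $\SR_\wbx(n,m)$ does call $\SL(n,m-1)$ at the \emph{same} $n$; well-foundedness is unaffected.)

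There is, however, a genuine confusion at the single most important point. You assert that the left premiss of $\wbx_R^1$ ``provides the context-reduced derivation that populates the extra hypothesis.'' It does not: the extra hypothesis, a derivation of $\mathcal{G}\mrg\mathcal{H}\fwd\Gamma,\Sigma\sa\Delta,\Pi\fwd\epsilon\sa A$, is obtained by cutting the \emph{right} premiss $\mathcal{G}\upd\dots\sa\dots,\wbx A\fwd\epsilon\sa A$ (or the sole premiss of $\wbx_R^2$) against $\mathcal{D}_2$ via $\SL(n,m-1)$ --- a premiss already present in the standard rule $\wbx_R$. The whole point of the additional \emph{left} premiss $\mathcal{G}\upd\Gamma\sa\Delta,A\bwd\Sigma\sa\Pi,\wbx A$ is the case you call crucial but leave unexplained: when $\mathcal{D}_2$ ends in $\wbx_L^2$ with premiss $\mathcal{H}\upd\Gamma',A\sa\Delta'$, the new occurrence of $A$ sits in the \emph{antecedent} of the component preceding the cut component, and the matching \emph{succedent} occurrence needed to reduce to a cut of complexity $n-1$ comes precisely from the $\Delta,A$ of the left premiss, after first cutting that left premiss against the conclusion of $\wbx_L^2$ on $\wbx A$ via $\SL(n,m-1)$ and then contracting. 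With the roles of the two premisses assigned as you describe them, the restart case has no succedent occurrence of $A$ to cut against, so the asserted ``cut on $B$ of complexity $<n$'' cannot be formed; this is where the sketch would break down if elaborated. A smaller omission: in $\SL$ you also need the cases where the last rule of $\mathcal{D}_1$ is $\wbx_L^2$, $\bbx_L^2$ or $\EW$ deleting the component containing $A$; there the cut cannot be permuted upward at all, and one instead modifies the rule instance and appeals to admissibility of weakening.
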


\begin{proof}
  We prove all four statements simultaneously by induction on the
  tuples $(n,m)$ in the lexicographic ordering. The step case for
  $\SR_\wbx(n,m)$ makes use of $\SR_\wbx(n,m-1)$, $\SL(n,m-1)$ and
  $\SL(n-1,m)$. Analogously for the case for $\SR_\bbx(n,m)$. For
  $\SR_p(n,m)$ we use $\SL(n,m-1)$, $\SL(n-1,k)$ and
  $\SR_p(n,m-1)$. The case for $\SL(n,m)$ uses the statements
  $\SL(n,m-1)$, $\SR_\wbx(n,m)$, $\SR_\bbx(n,m)$ and $\SR_p(n,m)$.

\subsubsection{Cases for $\SR_\wbx(n,m)$}
\label{sec:cases-srl_1}

\paragraph{Case: principal $\wbx_R^1$ vs principal $\wbx_L^1$.}
In this case the derivations end in:
\[
  \infer[\wbx_R^1]{\mathcal{G} \upd \Gamma \sa \Delta \bwd \Sigma
    \sa \Pi, \wbx A
  }
  {\infer*{\mathcal{G} \upd \Gamma \sa \Delta,A \bwd \Sigma
    \sa \Pi, \wbx A}{\mathcal{D}_3}
    &
    \infer*{\mathcal{G} \upd \Gamma \sa \Delta \bwd \Sigma
      \sa \Pi, \wbx A \fwd \epsilon\sa A
    }
    {\mathcal{D}_4
    }
  }
\]
and
\[
  \infer[\wbx_L^1]{\mathcal{H}\upd \Gamma' \sa \Delta' \bwd \Sigma',
    \wbx A \sa \Pi' \fwd \Omega \sa \Theta
  }
  {\infer*{\mathcal{H}\upd \Gamma' \sa \Delta' \bwd  \Sigma',
    \wbx A \sa \Pi' \fwd \Omega, A \sa \Theta}{\mathcal{D}_5}
  }
\]
By $\SR_\wbx(n,m-1)$ on the conclusion of $\wbx_R^1$ and the premiss of
$\wbx_L^1$ we obtain a derivation $\mathcal{D}_6$ of
\[
  \mathcal{G} \mrg \mathcal{H} \upd \Gamma, \Gamma' \sa \Delta,
  \Delta' \bwd \Sigma, \Sigma' \sa \Pi, \Pi' \fwd \Omega, A \sa \Theta\;.
\]
Note that we can apply $\SR_\wbx(n,m-1)$, because by assumption we
know that there is a derivation $\mathcal{D}_7$ of
\[
  \mathcal{G} \mrg \mathcal{H} \upd \Gamma, \Gamma' \sa
  \Delta,\Delta' \bwd \Sigma,\Sigma' \sa \Pi,\Pi' \fwd \epsilon\sa A\;.
\]
Further, applying $\SL(n-1,\dpth{\mathcal{D}_6} + \dpth{\mathcal{D}_7})$ to
these two linear nested sequents yields a derivation of
\[
  \mathcal{G} \mrg \mathcal{H}\mrg \mathcal{G} \mrg \mathcal{H} \fwd
  \Gamma, \Gamma',\Gamma, \Gamma' \sa
  \Delta,\Delta',\Delta,\Delta' \fwd \Sigma,\Sigma',\Sigma,\Sigma'
  \sa \Pi,\Pi',\Pi,\Pi' \fwd \Omega \sa \Theta
\]
Now admissibility of contraction yields the desired
\[
  \mathcal{G} \mrg \mathcal{H} \fwd \Gamma,\Gamma' \sa \Delta,
  \Delta' \fwd \Sigma,\Sigma' \sa \Pi,\Pi' \fwd \Omega \sa \Theta\;.
\]

\paragraph{Case: principal $\wbx_R^1$ vs principal $\wbx_L^2$.}
In this case the derivations end in:
\[
  \infer[\wbx_R^1]{\mathcal{G} \upd \Gamma \sa \Delta \bwd \Sigma
    \sa \Pi, \wbx A
  }
  {\infer*{\mathcal{G} \upd \Gamma \sa \Delta,A \bwd \Sigma
    \sa \Pi, \wbx A}{\mathcal{D}_3}
    &
    \infer*{\mathcal{G} \upd \Gamma \sa \Delta \bwd \Sigma
      \sa \Pi, \wbx A \fwd \epsilon\sa A
    }
    {\mathcal{D}_4
    }
  }
\]
and
\[
  \infer[\wbx_L^2]{\mathcal{H}\upd \Gamma' \sa \Delta' \bwd \Sigma',
    \wbx A \sa \Pi' 
  }
  {\infer*{\mathcal{H}\upd \Gamma',A \sa \Delta' }{\mathcal{D}_5}
  }
\]
An application of $\SL(n,m-1)$ to the left premiss of $\wbx_R^1$ and
the conclusion of $\wbx_L^2$ yields a derivation $\mathcal{D}_6$ of 
\[
  \mathcal{G} \mrg \mathcal{H} \upd \Gamma, \Gamma' \sa \Delta, A,
  \Delta' \bwd \Sigma,\Sigma' \sa \Pi,\Pi'
\]
Now an application of
$\SL(n-1,\dpth{\mathcal{D}_6}+\dpth{\mathcal{D}_5})$ gives
\[
  \mathcal{G} \mrg \mathcal{H} \mrg \mathcal{H} \upd \Gamma,\Gamma',
  \Gamma' \sa \Delta,\Delta',\Delta' \bwd \Sigma,\Sigma' \sa \Pi,\Pi'
\]
and contraction yields the desired result.

\paragraph{Case: principal $\wbx_R^1$ vs contextual $\wbx_L^2$.}
In case the premiss of $\wbx_L^2$ is not shorter than the conclusion
of $\wbx_R^1$, we simply apply $\SR_\wbx(n,m-1)$ to the conclusion of
$\wbx_R^1$ and the premiss of $\wbx_L^2$, followed by $\wbx_L^2$. The
additional assumption in $\SR_\wbx(n,m-1)$ of existence of a
derivation is trivially satisfied, because we can use the same
derivation we have by assumption. If the premiss of $\wbx_L^2$ is
shorter than the conclusion of $\wbx_R^1$ the derivations end in:
\[
  \infer[\wbx_R^1]{\mathcal{G} \upd \Gamma \sa \Delta \bwd \Sigma
    \sa \Pi, \wbx A
  }
  {\infer*{\mathcal{G} \upd \Gamma \sa \Delta,A \bwd \Sigma
    \sa \Pi, \wbx A}{\mathcal{D}_3}
    &
    \infer*{\mathcal{G} \upd \Gamma \sa \Delta \bwd \Sigma
      \sa \Pi, \wbx A \fwd \epsilon\sa A
    }
    {\mathcal{D}_4
    }
  }
\]
and
\[
  \infer[\wbx_L^2]{\mathcal{H}\upd \Gamma' \sa \Delta' \bwd \Sigma',
    \wbx A, \wbx B \sa \Pi' 
  }
  {\infer*{\mathcal{H}\upd \Gamma',B \sa \Delta' }{\mathcal{D}_5}
  }
\]
We simply replace the application of $\wbx_L^2$ with
\[
  \infer[\wbx_L^2]{\mathcal{H}\upd \Gamma' \sa \Delta' \bwd \Sigma',
    \wbx B \sa \Pi' 
  }
  {\infer*{\mathcal{H}\upd \Gamma',B \sa \Delta' }{\mathcal{D}_5}
  }
\]
Then by admissibility of internal weakening we obtain $\mathcal{G}
\mrg \mathcal{H} \upd \Gamma, \Gamma' \sa \Delta,\Delta' \bwd
\Sigma,\Sigma', \wbx B \sa \Pi,\Pi'$ as desired.

\paragraph{Case: principal $\wbx_R^1$ vs $\EW$.}
We have two subcases, depending on whether the premiss of the
application of $\EW$ is shorter than the conclusion of $\mathcal{D}_1$
or not. If it is not shorter, we apply $\SR_\wbx(n,m-1)$ to the
premiss of $\EW$, followed by the same rule. If it is shorter, the two
derivations $\mathcal{D}_1$ and $\mathcal{D}_2$ end in
\[
  \mathcal{G} \upd \Gamma \sa \Delta \bwd \Sigma \sa \Pi, \wbx A
\]
and
\[
  \infer[\EW]{\mathcal{H} \upd \Gamma' \sa \Delta' \bwd \wbx A,\Sigma' \sa \Pi'
  }
  {\infer*{\mathcal{H} \upd \Gamma' \sa \Delta'}{\mathcal{D}_3}
  }
\]
respectively. We replace this last application of $\EW$ with
\[
  \infer[\EW]{\mathcal{H} \upd \Gamma,\Sigma \sa \Delta,\Pi
  }
  {\infer*{\mathcal{H}}{\mathcal{D}_3}
  }
\]
and use admissibility of weakening to obtain the desired $\mathcal{G} \mrg
\mathcal{H} \upd \Gamma,\Sigma \sa \Delta,\Pi$.

\paragraph{Case: principal $\wbx_R^1$ vs context in other rules.}
The rules $\wbx_L^2$, $\bbx_L^2$ and $\EW$ are the only rules in which
the premiss is shorter than the conclusion, and the cases of
$\wbx_L^2$ and $\EW$ were covered above. For the case of $\bbx_L^2$,
since the linear nested sequents need to be structurally equivalent up
to the component containing the cut formula, the cut formula cannot be
in the last component of the conclusion of $\bbx_L^2$. Because of
this, and since no rule removes any formulae when moving from
conclusion to premisses, in this case the derivation $\mathcal{D}_2$
must end in
\[
  \infer[R]{\mathcal{H} \upd \Gamma' \sa \Delta' \bwd \wbx A, \Sigma'
    \sa \Pi' \upd \mathcal{I}
  }
  {\infer*{\mathcal{H} \upd \Gamma' \sa \Delta' \bwd \wbx A, \Sigma''
      \sa \Pi'' \upd \mathcal{I}'}{\mathcal{D}_3'}
  }
\]
for some rule $R$ with $\Sigma'' \supset \Sigma'$ and
$\Pi''\supset \Pi'$. By assumption we know that there is a derivation
of
$\mathcal{G} \mrg \mathcal{H} \upd \Gamma, \Gamma' \sa \Delta, \Delta'
\bwd \Sigma,\Sigma' \sa \Pi,\Pi' \fwd \epsilon\sa A$, hence by
admissibility of weakening we have a derivation of
$\mathcal{G} \mrg \mathcal{H} \upd \Gamma, \Gamma' \sa \Delta, \Delta'
\bwd \Sigma,\Sigma'' \sa \Pi,\Pi'' \fwd \epsilon\sa A$. Thus applying
$\SR_\wbx(n,m-1)$ to the derivation $\mathcal{D}_1$ and the premiss of
the rule $R$ yields
\[
  \mathcal{G} \mrg \mathcal{H} \fwd \Gamma, \Gamma' \sa \Delta,\Delta'
  \bwd \Sigma,\Sigma'' \sa \Pi, \Pi''
  \upd \mathcal{I}'
\]
and an application of $R$ gives
$\mathcal{G} \mrg \mathcal{H} \upd \Gamma, \Gamma' \sa \Delta,\Delta'
\bwd \Sigma,\Sigma' \sa \Pi, \Pi' \upd \mathcal{I}$.

\paragraph{Case: principal $\wbx_R^2$ vs principal $\wbx_L^1$.}
As for the case principal $\wbx_R^1$ vs principal $\wbx_L^1$.

\paragraph{Case: principal $\wbx_R^2$ vs principal $\wbx_L^2$.}
This case cannot occur, since the conclusions would not be
structurally equivalent.

\paragraph{Case: principal vs $\idrl$ or $\bot_L$.}
Since no logical rule has a propositional variable or $\bot$ as
principal formula, the formula $A$ must be part of the context in
$\idrl$ resp. $\bot_L$. Hence the desired linear nested sequent also
is the conclusion of an application of $\idrl$ resp. $\bot_L$.

\subsubsection{Cases for $\SR_\bbx$}
\label{sec:cases-sr_bbx}

Analogous to the cases for $\SR_\wbx$, with $\wbx$ and $\bbx$
inverted, as well as $\fwd$ and $\bwd$.

\subsubsection{Cases for $\SR_p$}
\label{sec:cases-sr_2}

\paragraph{Case: principal $\to_R$ vs principal $\to_L$.}
As usual: apply \emph{cross cuts}, i.e., applications of $\SL(n,m-1)$
to the conclusion $\to_R$ and the premisses of $\to_L$ and vice versa
to eliminate the occurrences of the principal formula from the
premisses. Then apply $\SL(k,\ell)$ with $k<n$ on the resulting
derivations to eliminate the auxiliary formulae, followed by
admissibility of contraction.

\paragraph{Case: principal vs contextual in $\wbx_L^2$ or $\bbx_L^2$.}
Same as the corresponding cases in the proofs of $\SR_\wbx(n,m)$ and
$\SR_\bbx(n,m)$ respectively.

\paragraph{Case: principal vs $\EW$.}
Again, we have two subcases, depending on whether the premiss of the
application of $\EW$ is shorter than the conclusion of $\mathcal{D}_1$
or not. If it is not shorter, we apply $\SR_p(n,m-1)$ to the premiss
of $\EW$, followed by the same rule. If it is shorter, the two
derivations $\mathcal{D}_1$ and $\mathcal{D}_2$ end in
\[
  \mathcal{G} \upd \Gamma \sa \Delta, A
\]
and
\[
  \infer[\EW]{\mathcal{H} \upd A,\Sigma \sa \Pi
  }
  {\infer*{\mathcal{H}}{\mathcal{D}_3}
  }
\]
respectively. We replace this last application of $\EW$ with
\[
  \infer[\EW]{\mathcal{H} \upd\Sigma \sa \Pi
  }
  {\infer*{\mathcal{H}}{\mathcal{D}_3}
  }
\]
and use admissibility of weakening to obtain the desired
$\mathcal{G} \mrg \mathcal{H} \fwd \Gamma,\Sigma \sa \Delta,\Pi$.

\paragraph{Case: principal vs context.}
Since the rules $\wbx_L^2, \bbx_L^2$ and $\EW$ are the only rules in
which the premiss is shorter than the conclusion, and since no rule
removes any formulae when moving from conclusion to premisses, in this
case the derivation $\mathcal{D}_2$ must end in
\[
  \infer[R]{\mathcal{H} \upd A, \Sigma \sa \Pi \upd \mathcal{I}
  }
  {\infer*{\mathcal{H} \upd A, \Sigma' \sa \Pi' \upd
      \mathcal{I}'}{\mathcal{D}_3'}
  }
\]
for some rule $R$. Now applying $\SR_p(n,m-1)$ to the derivation
$\mathcal{D}_1$ and the premiss of the rule $R$ yields
\[
  \mathcal{G} \mrg \mathcal{H} \upd \Gamma, \Sigma' \sa \Delta, \Pi'
  \upd \mathcal{I}'
\]
and an application of $R$ gives $\mathcal{G} \mrg \mathcal{H} \upd
\Gamma, \Sigma \sa \Delta, \Pi \upd \mathcal{I}$.

\paragraph{Case: principal vs $\idrl$ or $\bot_L$.}
Since no logical rule has a propositional variable or $\bot$ as
principal formula, the formula $A$ must be part of the context in
$\idrl$ resp. $\bot_L$. Hence the desired linear nested sequent also
is the conclusion of an application of $\idrl$ resp. $\bot_L$, followed
by $\EW$.

\subsubsection{Cases for $\SL$}
\label{sec:cases-sl}

\paragraph{Case: $A$ is principal in the last rule in
  $\mathcal{D}_1$.}
Since the principal formulae of all right rules are in the last
component, in this case the derivation $\mathcal{D}_1$ must end in
\[
  \infer*{\mathcal{G} \upd \Gamma \sa \Delta,A}{\mathcal{D}_1}
\]
We now distinguish cases according to the shape of $A$. If $A$ is not
of the shape $\wbx B$ or $\bbx B$, we apply $\SR_p(n,m)$ to obtain the
result. If $A$ is of the shape $\wbx B$, the last rule in
derivation~$\mathcal{D}_1$ is the rule $\wbx_R^1$ or $\wbx_R^2$. In
the first case, it ends in
\[
  \infer[\wbx_R^1]{\mathcal{G} \upd \Omega \sa \Theta \bwd \Gamma \sa \Delta, \wbx B
  }
  {\infer*{\mathcal{G} \upd \Omega \sa \Theta, B \bwd \Gamma \sa
      \Delta, \wbx B}{\mathcal{D}_3}
    &
    \infer*{\mathcal{G} \upd \Omega \sa \Theta \bwd \Gamma \sa \Delta, \wbx
    B \fwd \epsilon\sa B}{\mathcal{D}_4}
  }
\]
and $\mathcal{D}_2$ ends in
\[
  \infer*{\mathcal{H} \upd \Xi \sa \Upsilon \bwd \wbx B, \Sigma \sa \Pi}{\mathcal{D}_2}
\]
By $\SL(n,m-1)$ on $\mathcal{D}_4$ and $\mathcal{D}_2$ we know that
there is a derivation $\mathcal{D}_3$ of
$\mathcal{G}\mrg \mathcal{H} \upd \Omega,\Xi \sa \Theta,\Upsilon \bwd
\Gamma, \Sigma \sa \Delta, \Pi \fwd \epsilon\sa B$. Hence
$\SR_\wbx(n,m)$ is applicable and yields a derivation of
$\mathcal{G} \mrg \mathcal{H} \upd \Omega,\Xi \sa \Theta,\Upsilon \bwd
\Gamma, \Sigma \sa \Delta, \Pi$. Note that the depth of the derivation
$\mathcal{D}_3$ is irrelevant.

In case the last rule in $\mathcal{D}_1$ was $\wbx_R^2$, the reasoning
is the same.

If $A$ is of the shape $\bbx B$, the argument is analogous to the
above, using $\SR_\bbx(n,m)$ instead of $\SR_\wbx(n,m)$.

\paragraph{Case: Last rule in $\mathcal{D}_1$ is $\wbx_L^2$ or $\bbx_L^2$.}
We only consider the case of $\wbx_L^2$, the case of $\bbx_L^2$ is
analogous.  We distinguish cases according to whether the occurrence
of $A$ is in the last component or not. If it is not, we apply
$\SL(n,m-1)$ to the premiss of the application of $\EW$, followed by
the same rule. If the occurrence of $A$ is in the last component, the
derivation $\mathcal{D}_1$ ends in
\[
  \infer[\wbx_L^2]{\mathcal{G} \upd \Gamma \sa \Delta \bwd \Sigma,
    \wbx B \sa \Pi, A
  }
  {\infer*{\mathcal{G} \upd \Gamma, B \sa \Delta }{\mathcal{D}_3}
  }
\]
In this case we change the application of $\wbx_L^2$ to
\[
  \infer[\wbx_L^2]{\mathcal{G} \upd \Gamma \sa \Delta \bwd \Sigma,
    \wbx B \sa \Pi
  }
  {\infer*{\mathcal{G} \upd \Gamma, B \sa \Delta }{\mathcal{D}_3}
  }
\]
and are done using admissibility of internal weakening.

\paragraph{Case: last rule in $\mathcal{D}_1$ is $\EW$.}
We distinguish cases according to whether the occurrence of $A$ is in
the last component or not. If it is not, we apply $\SL(n,m-1)$ to the
premiss of the application of $\EW$, followed by the same rule. If the
occurrence of $A$ is in the last component, the derivation
$\mathcal{D}_1$ ends in
\[
  \infer[\EW]{\mathcal{G} \upd \Gamma \sa \Delta, A
  }
  {\infer*{\mathcal{G}}{\mathcal{D}_1'}
  }
\]
We replace this application of $\EW$ with
\[
  \infer[\EW]{\mathcal{G} \upd \Gamma,\Sigma \sa \Delta,\Pi
  }
  {\infer*{\mathcal{G}}{\mathcal{D}_1'}
  }
\]
and then admissibility of weakening yields the desired result.

\paragraph{Case: Last rule in $\mathcal{D}_1$ is $\idrl$ or
  $\bot_L$.}
In this case the desired linear nested sequent also is the conclusion
of $\idrl$ resp. $\bot_L$ followed by $\EW$.

\paragraph{Case: The formula $A$ is contextual in the last rule in
  $\mathcal{D}_1$ which is not $\wbx_L^2$ or $\bbx_L^2$.}
In this case $\mathcal{D}_1$ ends in
\[
  \infer[R]{\mathcal{G} \upd \Gamma \sa \Delta, A \upd \mathcal{I}
  }
  {\infer*{\mathcal{G} \upd \Gamma_1 \sa \Delta_1, A \upd
      \mathcal{I}_1}{\mathcal{D}_3}
  }
\]
or
\[
  \infer[R]{\mathcal{G} \upd \Gamma \sa \Delta, A \upd \mathcal{I}
  }
  {\infer*{\mathcal{G} \upd \Gamma_1 \sa \Delta_1, A \upd
      \mathcal{I}_1}{\mathcal{D}_3}
    &
    \infer*{\mathcal{G} \upd \Gamma_2 \sa \Delta_2, A \upd
      \mathcal{I}_2}{\mathcal{D}_4}
  }
\]
for some rule $R$. Applying $\SL(n,m-1)$ to the premiss(es) or $R$ and
$\mathcal{D}_2$ yields
\[
  \mathcal{G} \mrg \mathcal{H} \upd \Gamma_i,\Sigma \sa \Delta_i,\Pi
  \upd \mathcal{I}_i
\]
and applying the rule $R$ gives the desired result. \qed
\end{proof}
\end{document}